\newtheorem{theorem}{Theorem}
\DeclareMathOperator*{\argmin}{arg\,min}
\begin{document}

\title{Posterior Integration on a Riemannian Manifold}

\author{Chris. J. Oates$^{1,3}$, Alessandro Barp$^{2,3}$ and Mark Girolami$^{2,3}$ \\
$^1$School of Mathematics, Statistics and Physics, Newcastle University, UK \\
$^2$Department of Mathematics, Imperial College London, UK \\
$^3$Alan Turing Institute, UK
}

\maketitle

\begin{abstract}
The geodesic Markov chain Monte Carlo method and its variants enable computation of integrals with respect to a posterior supported on a manifold.
However, for regular integrals, the convergence rate of the ergodic average will be sub-optimal.
To fill this gap, this paper extends the efficient posterior integration method of \cite{Oates2017} to the case of a Riemannian manifold.
In contrast to the original Euclidean case, no non-trivial boundary conditions are needed for a closed manifold.
The method is assessed through simulation and deployed to compute posterior integrals for an Australian Mesozoic paleomagnetic pole model, whose parameters are constrained to lie on the manifold $M = \mathbb{S}^2 \times \mathbb{R}_+$.
\end{abstract}

\section{Introduction}

This work considers numerical approximation of an integral
\begin{eqnarray}
\int_M f \; \mathrm{d}\mathcal{P} \label{eq: integral}
\end{eqnarray}
where $M$ is a $m$-dimensional Riemannian manifold, $\mathcal{P}$ is a distribution suitably defined on $M$ and $f : M \rightarrow \mathbb{R}$ is a $\mathcal{P}$-measurable integrand.
This fundamental problem is well-studied in applied mathematics and existing methods include Gaussian cubatures \citep{Atkinson1982,Filbir2010}, cubatures based on uniformly-weighted quasi Monte Carlo points \citep{Kuo2005,Graef2013} and cubatures based on optimally-weighted Monte Carlo points \citep{Brandolini2010,Ehler2017}.
These numerical integration methods assume that a closed form for $\mathcal{P}$ is provided.
However, this is not the case for many important integrals that occur in the applied statistical context.
In particular, in the Bayesian framework the distribution $\mathcal{P}$ can represent {\it posterior} belief: i.e.
$$
\frac{\mathrm{d}\mathcal{P}}{\mathrm{d}\mathcal{P}_0}(\bm{x}) = \frac{\mathcal{L}(\bm{x})}{Z}
$$
where $\mathcal{P}_0$ is an expert-elicited {\it prior} distribution on $M$ and a function $\mathcal{L}$, known as a {\it likelihood}, determines how the expert's belief should be updated on the basis of data obtained in an experiment; see \cite{Bernardo2001} for the statistical background.
The left hand side of this equation is to be interpreted as the Radon-Nikodym derivative of the posterior with respect to the prior \citep{Stuart2010}.
Outside of conjugate exponential families, posterior distributions are not easily characterised, as the normalisation constant (or {\it marginal likelihood})
\begin{eqnarray*}
Z & = & \int_M \mathcal{L}(\bm{x}) \; \mathrm{d}\mathcal{P}_0(\bm{x}) \label{eq: Z}
\end{eqnarray*}
is itself an intractable integral.
Several methods for approximation of $Z$ have been developed, but this problem is considered difficult -- even in the case of the Euclidean manifold \citep[see the survey in][]{Friel2012}.

Integrals on manifolds arise in many important applications of Bayesian statistics, most notably directional statistics \citep{Mardia2000} and modelling of functional data on the sphere $\mathbb{S}^2$ \citep{Porcu2016}.
The canonical scenario is that $\mathcal{P}_0$ and $\mathcal{P}$ admit densities $\pi_0$ and $\pi_{\mathcal{P}}$ with respect to the natural volume element $\mathrm{d}V$ on the manifold, i.e. $\mathrm{d}\mathcal{P}_0 = \pi_0 \mathrm{d} V$ and $\mathrm{d}\mathcal{P} = \pi_{\mathcal{P}} \mathrm{d} V$, and that a function $\pi$ proportional to $\pi_{\mathcal{P}}$ can be provided.
Specifically, in the context of Bayesian statistics on a Riemannian manifold, we are provided with $\pi(\bm{x}) = \mathcal{L}(\bm{x}) \pi_0(\bm{x})$.
To facilitate approximation of Eqn. \ref{eq: integral} in this context, Markov chain Monte Carlo (MCMC) methods have been developed to sample from distributions defined on a manifold \citep{Byrne2013,Lan2014,Holbrook2016}.
Their output is a realisation of an ergodic Markov process $(\bm{x}_i)_{i=1}^n$ that leaves $\mathcal{P}$ invariant, so that the integral may be approximated by an ergodic average $\frac{1}{n} \sum_{i=1}^n f(\bm{x}_i)$.
Convergence of MCMC estimators is well-understood \citep{Meyn2012}.

A drawback of MCMC, of practical significance for regular problems of modest dimension, is that the convergence is gated at $O_P(n^{-\frac{1}{2}})$.
This rate is inferior to the rates obtained by the aforementioned methods that apply when $\mathcal{P}$ is explicit; a consequence of the fact that the ergodic average does not exploit smoothness of the integrand.
In recent years, several alternatives to MCMC have been developed to address this convergence bottleneck.
These include transport maps \citep{Marzouk2016}, Riemann sums \citep{Philippe2001}, quasi Monte Carlo ratio estimators \citep{Schwab2012} and estimators based on Stein's method \citep{Liu2016,Oates2017,Oates2018}.
However, these methods have so far focused on the case of the Euclidean manifold $M = \mathbb{R}^d$.

In this paper we generalise one of these methods -- the method proposed in \cite{Oates2017} -- for computation of posterior integrals on a Riemannian manifold.
Inspired by classical integration methods, our approach is based on approximation of the integrand: 
In the first step, the un-normalised density $\pi$ is exploited to construct a class $H$ of functions, defined on the manifold, that can be exactly integrated with respect to $\mathcal{P}$. 
Next, the integrand $f$ is approximated with a suitably chosen element $\hat{f}$ from $H$. 
Finally, an approximation to Eqn. \ref{eq: integral} is provided by $\int_M \hat{f} \mathrm{d} \mathcal{P}$.
The main technical contribution occurs in the first step, where we must elucidate a class $H$ of functions that can be integrated without access to $Z$, the normalisation constant.

The main properties of the proposed method, which hold also for the Euclidean case \cite{Oates2017,Oates2018}, are as follows:
\begin{itemize}
\item The convergence rate is empirically verified at $o(n^{-\frac{1}{2}})$, under regularity assumptions on the integrand.
On the other hand, the computational cost associated with the estimator is up to $O(n^3)$.
\item The points $\{\bm{x}_i\}_{i=1}^n$ at which the integrand is evaluated {\it do not} need to form an approximation to $\mathcal{P}$.
\item A computable upper bound on (relative) integration error -- a {\it kernel Stein discrepancy} \citep{Chwialkowski2016,Liu2016b,Gorham2017} -- is obtained as a by-product of approximating the integral.
\end{itemize}
Moreover, in this paper we demonstrate that, compared to the case of the Euclidean manifold, non-trivial boundary conditions are not required for the method to be applied on a closed manifold.

The paper proceeds as follows:
In Sec. \ref{sec: background} we provide a brief mathematical background.
In Sec. \ref{sec: method} we present the proposed method.
The method is empirically assessed in Sec. \ref{sec: results}.
Further discussion of the approach is provided in Sec. \ref{sec: discussion}.

\section{Mathematical Background} \label{sec: background}

The aim of this section is to present an informal and accessible introduction to some of the mathematical tools that are needed for our development.
For a formal treatment, several references to textbooks are provided.

\paragraph{Embedded Riemannian Manifolds}

Our presentation focuses on embedded manifolds, as any abstract manifold may be embedded in $\mathbb R^d$ for some $d \in \mathbb{N}$ by the Whitney embedding theorem \citep{Skopenkov2008}. 
However, the method itself will not require an embedding to be explicit.
Recall the space $\mathbb R^d$ may be equipped with global coordinates $\bm{x} = (x_1, \dots ,x_d)$ and the natural inner product. 
An $m$-dimensional manifold $M$ embedded in $\mathbb R^d$ is a subset of $\mathbb R^d$ such that any point $\bm{x} \in M$ has a neighbourhood $O_{\bm{x}}\subset M$ which can be parameterised by local coordinates $\bm{q}=(q_1, \dots ,q_m) \in Q_{\bm{x}} \subseteq \mathbb{R}^m$; i.e. there exists a smooth map $\nu_{\bm{x}} : Q_{\bm{x}} \rightarrow O_{\bm{x}}$ with smooth inverse $\nu_{\bm{x}}^{-1}$. 

A tangent vector to $M$ at a point $\bm{x}$ is defined as the tangent vector at $\bm{x}$ to a curve on $M$. 
Since a curve on $M$ may be locally parametrised as $\bm{\gamma}(t)=\nu_{\bm{x}}(\bm{q}(t))$, its tangent vector is 
$$
\underline{\bm{\gamma}}' \; = \; \sum_{i=1}^m \frac{\mathrm{d} q_i(t)}{\mathrm{d} t} \underbrace{\frac{ \partial \nu_{\bm{x}} (\bm{q})}{\partial q_i}}_{\partial_{q_i}} .
$$ 
Thus the $m$ vectors $\partial_{q_i} \in \mathbb{R}^d$ form a basis for the tangent space, denoted $T_{\bm{x}} M$. 
The tangent space is equipped with the inner product $\langle \underline{\bm{a}} , \underline{\bm{b}} \rangle_{\mathrm{G}(\bm{x})} = \underline{\bm{a}}^\top \mathrm{G}(\bm{x}) \underline{\bm{b}}$, where $\mathrm{G}_{ij}(\bm{x}) = \langle \partial_{q_i} , \partial_{q_j} \rangle$, that arises as the restriction of the usual inner product on $\mathbb{R}^d$. 
The pair $(M,\mathrm{G})$ is called a Riemannian manifold.

{\it Example:} The sphere $\mathbb{S}^2$ is a Riemannian manifold.
The coordinate patch $\nu_{\bm{x}}(\bm{q})=(\cos q_1 \sin q_2 , \sin q_1 \sin q_2 , \cos q_2)$, with local coordinates $q_1 \in (0,2\pi)$, $q_2 \in (0,\pi)$, holds for almost all\footnote{It does not cover the half great circle that passes through both poles and the point $(1,0,0)$.} $\bm{x} \in \mathbb{S}^2$. 
The tangent space is spanned by $\partial_{q_1}  = (-\sin q_1 \sin q_2, \cos q_1 \sin q_2, 0)$ and $\partial_{q_2} = (\cos q_1 \cos q_2, \sin q_1 \cos q_2, -\sin q_2)$. 
Taking the Euclidean inner product of these vectors produces $\mathrm{G}_{1,1}= \sin^2 q_2$, $\mathrm{G}_{2,2} = 1$, $\mathrm{G}_{1,2} = \mathrm{G}_{2,1} = 0$.

\paragraph{Geometric Measure Theory}

Any oriented Riemannian manifold has a natural measure $\mathrm{d}V$ over its Borel algebra, called the Riemannian volume form, which represent an infinitesimal volume element. 
In a coordinate patch $Q_{\bm{x}} \subset \mathbb{R}^m$, this measure can be expressed in terms of the Lebesgue measure: $\mathrm{d}V =\sqrt{\text{det}(\mathrm{G}(\bm{x}))} \lambda^m(\mathrm{d}\bm{q})$. 
In particular when $M$ is the Euclidean space, this is just the Lebesgue measure, and when $M$ is an embedded manifold in $\mathbb R^d$, $\mathrm{d}V$ is the surface area (or Hausdorff) measure $\mathcal{H}(\mathrm{d}\bm{x})$ \citep{Federer1969}. 

{\it Example:} For the sphere $\mathbb{S}^2$, $\mathrm{d}V = \sin q_2 \mathrm{d} q_1 \mathrm{d} q_2$, where $\sin q_2$ is the area of the parallelogram spanned by $\partial_{q_1}, \partial_{q_2}$.

A technical point is that we restrict attention to Riemannian manifolds that are \emph{oriented}.
This is equivalent to assuming that the volume form $\mathrm{d}V$ is coordinate independent. 
It will also be required that $M$ is either closed or is a \emph{manifold with boundary} $\partial M$ \citep[see p25 of][]{Lee2013}.

\paragraph{Calculus on a Riemannian Manifold}

To present a natural, coordinate-independent construction of differential operators on manifolds would require either exterior calculus or the concept of a covariant derivative. 
To limit scope, we present two important differential operators in local coordinates and merely comment that the associated operators are in fact coordinate-independent; full details can be found in \cite{Bachman2006}. 
To this end, denote the gradient of a function $\phi : M \rightarrow \mathbb{R}$, assumed to exist, as
$$
\nabla \phi = \sum_{i,j = 1}^m [\mathrm{G}^{-1}]_{i,j} \frac{\partial \phi}{\partial q_j} \partial_{q_i}
$$
Likewise, define the divergence of a vector field $\underline{\bm{s}} = s_1 \partial_{q_1} + \dots + s_m  \partial_{q_m}$ with $s_i = s_i(\bm{x})$, assumed to exist, as
$$
\nabla \cdot \underline{\bm{s}} = \sum_{i=1}^m \frac{\partial s_i}{\partial q_i} + s_i \frac{\partial}{\partial q_i} \log \sqrt{\text{det}(\mathrm{G})}.
$$
These two differential operators are sufficient for our work; for instance, they can be combined to obtain the Laplacian $\Delta \phi := \nabla \cdot \nabla \phi$.

\paragraph{Divergence Theorem on a Manifold}

The divergence theorem on an oriented Riemannian manifold with boundary $\partial M$ states that: 
\begin{eqnarray*}
\int_M \nabla \cdot \underline{\bm{s}} \; \mathrm{d}V & = & \int_{\partial M} \langle \underline{\bm{s}}, \underline{\bm{n}} \rangle_{\mathrm{G}} \; i_{\underline{\bm{n}}} \mathrm{d}V 
\end{eqnarray*}
where $i_{\underline{\bm{n}}} \mathrm{d}V$ is the volume form on the boundary $\partial M$ and $\underline{\bm{n}}$ is the unit normal vector pointing outward \citep{Bachman2006}.
To define $\underline{\bm{n}}$ one uses the fact that, if $\tilde{M}$ is a Riemannian submanifold of $M$, then for each $\bm{x} \in \tilde{M}$, the metric $\mathrm{G}$ of $M$ splits the tangent space $T_{\bm{x}}M$ into $T_{\bm{x}} \tilde{M}$ and its orthogonal complement $N_{\bm{x}}$; i.e. $T_{\bm{x}} M = T_{\bm{x}} \tilde{M} \oplus N_{\bm{x}}$. 
Elements of $N_{\bm{x}}$ are normal vectors to $\tilde{M}$.
To define $i_{\underline{\bm{n}}} \mathrm{d}V$, note that $\partial M$ is a submanifold of $M$ and the restriction $\left. \mathrm{G} \right|_{\partial M}$ of the metric $\mathrm{G}$ induces a Riemannian mainfold $(\partial M , \left. \mathrm{G} \right|_{\partial M})$.
Then $i_{\underline{\bm{n}}} \mathrm{d}V$ can be seen as the natural volume form on the induced manifold.

Thus if $M$ is a closed manifold, then $\int_M \nabla \cdot \underline{\bm{s}} \; \mathrm{d}V = 0$.
This fact will allow for considerable simplification of the proposed approach, compared to the Euclidean case studied in \cite{Oates2017,Oates2018}, where non-trivial boundary conditions were required.

\paragraph{Reproducing Kernel Hilbert Spaces}

The definition of a (real-valued) reproducing kernel Hilbert space (RKHS) on the manifold $M$ is identical to the usual definition on the Euclidean manifold.
Namely, an RKHS is a Hilbert space $(H,\langle \cdot , \cdot \rangle_H )$ of functions $H \ni h : M \rightarrow \mathbb{R}$ equipped with a {\it kernel}, $k : M \times M \rightarrow \mathbb{R}$, that satisfies: (a) $k(\cdot,\bm{x}) \in H$ for all $\bm{x} \in M$; (b) $k(\bm{x},\bm{y}) = k(\bm{y},\bm{x})$ for all $\bm{x},\bm{y} \in M$; (c) $\langle h , k(\cdot,\bm{x}) \rangle_H = h(\bm{x})$ for all $h \in H$, $\bm{x} \in M$.
For standard manifolds, such as the sphere $\mathbb{S}^2$, several function spaces and their reproducing kernels have been studied \citep[e.g.][]{Porcu2016}.
For more general manifolds, an extrinsic kernel can be induced from restriction under embedding into an ambient space \citep{Lin2017}, or the stochastic partial differential approach \citep{Fasshauer2011,Lindgren2011,Niu2017} can be used to numerically approximate a suitable intrinsic kernel.

Three important facts will be used later:
First, the kernel $k$ characterises the inner product $\langle \cdot , \cdot \rangle_H$ and the set $H$ consists of functions $h$ with finite norm $\|h\|_H = \langle h , h \rangle_H^{1/2}$.
Second, if $H$ and $\tilde{H}$ are two RKHS on $M$ with reproducing kernels $k$ and $\tilde{k}$, then $H + \tilde{H}$ can be defined as the RKHS whose elements can be written as $h + \tilde{h}$, $h \in H$, $\tilde{h} \in \tilde{H}$, with reproducing kernel $k + \tilde{k}$.
Third, if $L$ is a linear operator and $H$ is an RKHS, then $L H$ can be defined as the set $LH = \{L(h) : h \in H\}$ endowed with the reproducing kernel $L \bar{L} k(\bm{x},\bm{y})$, where $\bar{L}$ denotes the adjoint of the operator $L$, which acts on the second argument $\bm{y}$ rather than the first argument $\bm{x}$.
See \cite{Berlinet2011} for several examples of RKHS and additional technical background.

This completes our brief tour of the mathematical prerequisites; the next section describes the proposed posterior integration method.

\section{Posterior Integration on a Manifold} \label{sec: method}

In this section we present the proposed numerical integration method, which proceeds in three steps:
\begin{enumerate}
\item Construct a flexible class $H$ of functions $h : M \rightarrow \mathbb{R}$ such that the integrals $\int_M h \; \mathrm{d}\mathcal{P}$ with respect to $\mathcal{P}$ can be exactly computed.
\item Approximate the integrand $f$ with a suitably chosen element $\hat{f}$ from $H$.
\item Approximate the integral of interest as $\int_M \hat{f} \; \mathrm{d} \mathcal{P}$.
\end{enumerate}
It is clear that step 1 is non-trivial, since availability of the normalisation constant $Z$ in Eqn. \ref{eq: Z} cannot be assumed.
This will be our focus next.

\paragraph{Step \# 1: Constructing an Approximating Class $H$}

To proceed, we generalise the method of \cite{Oates2017} to the case of an oriented Riemannian manifold.

Let $\Phi$ be a RKHS of twice differentiable functions $\phi : M \rightarrow \mathbb{R}$ whose reproducing kernel is denoted $k$.
Then $\underline{\bm{s}} = \pi \nabla \phi$ denotes a gradient field on $M$ and we may consider its divergence $\nabla \cdot \underline{\bm{s}}$ on $M$.
In particular, consider the linear differential operator 
$$
L_{\pi}(\phi) \; := \; \frac{\nabla \cdot (\pi \nabla \phi)}{\pi}.
$$
The proposed method rests on the following result, which is proven in Sec. \ref{ap: proof sec} of the Supplement:
\begin{theorem} \label{thm: main}
For $\phi \in \Phi$ it holds that
\begin{eqnarray*}
\int_M L_\pi(\phi) \mathrm{d} \mathcal{P} & = & \frac{1}{Z} \int_{\partial M} \langle \pi \nabla \phi , \underline{\bm{n}} \rangle_{\mathrm{G}} \; i_{\underline{\bm{n}}} \mathrm{d}V 
\end{eqnarray*}
Thus, if $M$ is a closed manifold, the right hand side vanishes and $L_\pi (\phi)$ can be trivially integrated.
\end{theorem}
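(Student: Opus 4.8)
The plan is to reduce the claimed identity to a single application of the divergence theorem, after rewriting the posterior integral against the volume form. First I would record that, because $\mathrm{d}\mathcal{P}_0 = \pi_0\,\mathrm{d}V$ and the Radon--Nikodym derivative of $\mathcal{P}$ with respect to $\mathcal{P}_0$ equals $\mathcal{L}/Z$, the posterior admits the density $\pi_{\mathcal{P}} = \pi/Z$ against the volume form, where $\pi = \mathcal{L}\pi_0$. Hence $\mathrm{d}\mathcal{P} = (\pi/Z)\,\mathrm{d}V$, with $Z = \int_M \pi\,\mathrm{d}V$ ensuring this is a probability measure. The role of this observation is to convert the integral against the abstract measure $\mathcal{P}$, which is what makes the left-hand side meaningful, into an integral against $\mathrm{d}V$, which is what the divergence theorem controls.

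Substituting this density into the left-hand side, the key point is that the factor $\pi$ in the denominator of $L_\pi(\phi)$ cancels exactly against the factor $\pi$ appearing in the density:
\begin{eqnarray*}
\int_M L_\pi(\phi)\,\mathrm{d}\mathcal{P} & = & \int_M \frac{\nabla\cdot(\pi\nabla\phi)}{\pi}\cdot\frac{\pi}{Z}\,\mathrm{d}V \; = \; \frac{1}{Z}\int_M \nabla\cdot(\pi\nabla\phi)\,\mathrm{d}V .
\end{eqnarray*}
The integrand is now a pure divergence of the vector field $\underline{\bm{s}} = \pi\nabla\phi$, so I would invoke the divergence theorem on an oriented manifold stated above with this choice of $\underline{\bm{s}}$, obtaining $\frac{1}{Z}\int_{\partial M}\langle\pi\nabla\phi,\underline{\bm{n}}\rangle_{\mathrm{G}}\,i_{\underline{\bm{n}}}\mathrm{d}V$, which is exactly the claimed right-hand side. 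The final sentence is then immediate, since $\partial M = \varnothing$ when $M$ is closed and the boundary integral vanishes.

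The only genuine obstacle lies in justifying that the divergence theorem may legitimately be applied, i.e.\ that $\underline{\bm{s}} = \pi\nabla\phi$ is an admissible vector field. Since $\phi \in \Phi$ is twice differentiable, $\nabla\phi$ is continuously differentiable, and assuming $\pi$ is sufficiently smooth (as holds for the likelihood-times-prior densities of interest) the product $\pi\nabla\phi$ is a $C^1$ vector field whose divergence $\nabla\cdot(\pi\nabla\phi)$ is well defined and integrable against $\mathrm{d}V$. I would verify these regularity and integrability conditions explicitly, alongside the orientability of $M$ already assumed in Sec.~\ref{sec: background}, so that the hypotheses of the theorem are met. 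When $M$ is closed these are the only conditions to check: no boundary terms survive, and so none of the non-trivial boundary decay conditions required in the Euclidean treatment of \cite{Oates2017,Oates2018} are needed here.
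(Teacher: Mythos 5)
Your proposal is correct and follows essentially the same route as the paper's own proof: rewrite $\mathrm{d}\mathcal{P}$ as $(\pi/Z)\,\mathrm{d}V$, cancel the factor of $\pi$ in $L_\pi(\phi)$ against the density, and apply the divergence theorem to the vector field $\pi\nabla\phi$. Your additional remarks on the regularity of $\pi\nabla\phi$ are a sensible supplement but do not change the argument.
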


Note that the same conclusion holds even when $M$ is not closed, provided that $\langle \pi \nabla \phi , \underline{\bm{n}} \rangle_{\mathrm{G}}$ vanishes everywhere on the boundary $\partial M$; this is similar to the non-trivial assumption made in \cite{Oates2017} for the case of the Euclidean manifold.
Note that $L_\pi$ is not the only differential operator that could be used; others are suggested in Sec. \ref{ap: stein operators} of the Supplement.

The RKHS $H_\pi = L_\pi \Phi$, whose elements are functions of the form $L_\pi(\phi)$ and whose kernel is $k_\pi = L_\pi \bar{L}_\pi k$, is not quite flexible enough for our purposes, since these cannot approximate the constant function $f(\bm{x}) = 1$.
Thus we augment $L_\pi \Phi$ with the RKHS of constant functions, denoted $\{\sigma\}$ and equipped with constant kernel $\sigma^2$, to obtain the function class $H_{\pi,\sigma} = \{\sigma\} + L_\pi \Phi$. 
Of course, the integral of the constant function with respect to $\mathcal{P}$ is trivially computed as $\mathcal{P}$ is a probability distribution.
It follows that $H_{\pi,\sigma}$ is a RKHS with kernel $k_{\pi,\sigma}(\bm{x},\bm{y}) = \sigma^2 + k_\pi$. 
To ensure the elements of $H_{\pi,\sigma}$ integrate to zero under $\mathcal{P}$, from Theorem \ref{thm: main}, we therefore require that $\langle \pi \nabla k(\cdot,\bm{y}) , \underline{\bm{n}} \rangle_{\mathrm{G}}$ vanishes on $\partial M$ for each fixed $\bm{y} \in M$ whenever $M$ is not closed.

Under certain regularity assumptions, and additional technical details to deal with the fact that a slightly different differential operator was used, the set $H_{\pi,\sigma}$ can be shown to be dense in $L_2(\mathcal{P})$ in the case of the Euclidean manifold \citep[c.f. Lemma 4 of][]{Oates2018}.

\paragraph{Step \#2: Approximating the Integrand}

Now that we have a class of functions $H_{\pi,\sigma}$ that can be exactly integrated, we must attempt to approximate $f$ with an element from this set.
Following \cite{Oates2017}, the estimator that we consider is
$$
\hat{f} = \argmin_{h \in H_{\pi,\sigma}} \|h\|_{H_{\pi,\sigma}} \; \text{s.t.} \; h(\bm{x}_i) = f(\bm{x}_i), \; i \in \{ 1,\dots,n \} .
$$
From the representer theorem \citep[see e.g.][]{Scholkopf2001} it follows that $\hat{f}$ has a closed form expression in terms of the kernel $k_{\pi,\sigma}$:
\begin{eqnarray*}
\hat{f}(\cdot) & = & [k_{\pi,\sigma}(\cdot,\bm{x}_1) \dots k_{\pi,\sigma}(\cdot,\bm{x}_n)] \times  { \underbrace{\left[  \begin{array}{ccc} k_{\pi,\sigma}(\bm{x}_1,\bm{x}_1) & \dots & k_{\pi,\sigma}(\bm{x}_1,\bm{x}_n) \\ \vdots & & \vdots \\ k_{\pi,\sigma}(\bm{x}_n,\bm{x}_1) & \dots & k_{\pi,\sigma}(\bm{x}_n,\bm{x}_n) \end{array} \right]}_{\mathbf{K}_{\pi,\sigma}} }^{-1} \underbrace{\left[ \begin{array}{c} f(\bm{x}_1) \\ \vdots \\ f(\bm{x}_n) \end{array} \right]}_{\mathbf{f}}
\end{eqnarray*}
This has the form of a weighted combination of functions in $H_{\pi,\sigma}$:
$$
\hat{f}(\cdot) = \sum_{i=1}^n w_i k_{\pi,\sigma}(\cdot,\bm{x}_i), \qquad w_i = [\mathbf{K}_{\pi,\sigma}^{-1} \mathbf{f}]_i
$$
The form of the estimator $\hat{f}$ is rather standard and can be characterised in several ways, e.g. as a Bayes rule for an $L_2$ regression problem or as a posterior mean under a suitable Gaussian process regression model.
Alternative kernel estimators, such as estimators that enforce non-negativity of the weights $w_i$, could be considered \citep[c.f.][]{Liu2017b,Ehler2017}.

\paragraph{Step \#3: Approximating the Integral}

The approximation $\hat{f}$ can be exactly integrated by construction:
\begin{eqnarray}
\int_M \hat{f} \; \mathrm{d} \mathcal{P} & = & \sum_{i=1}^n w_i \int_M k_{\pi,\sigma}(\cdot , \bm{x}_i) \; \mathrm{d} \mathcal{P} \nonumber \\
& = &  \sum_{i=1}^n w_i \sigma^2 \; = \; \sigma^2 \mathbf{1}^\top \mathbf{K}_{\pi,\sigma}^{-1} \mathbf{f} \label{eq: kernel quadrature}
\end{eqnarray}
The estimate in Eqn. \ref{eq: kernel quadrature} is recognised as a kernel quadrature method and, as such, it carries a Bayesian interpretation \citep{Briol2016}.
Namely, from the Bayesian perspective, Eqn. \ref{eq: kernel quadrature} is the posterior mean for $\int_M f \mathrm{d}\mathcal{P}$ when $f$ is modelled {\it a priori} as a centred Gaussian process with covariance function $k_{\pi,\sigma}$ \citep[see][for background on Gaussian process models]{Rasmussen2006}.
In this light, the parameter $\sigma$ can be considered as a prior standard deviation for the value of the integral $\int_M f \mathrm{d} \mathcal{P}$.
Thus, since we may not know the size of the values taken by $f$ in advance, we consider a weakly informative prior corresponding to the limit $\sigma \rightarrow \infty$.
To this end, we have the following, proven in Sec. \ref{ap: proof sec} of the Supplement:
\begin{theorem} \label{thm: limit}
Let $\mathbf{K}_\pi$ denote the kernel matrix with entries $k_\pi(\bm{x}_i,\bm{x}_j)$.
Then
\begin{eqnarray}
\lim_{\sigma \rightarrow \infty} \int_M \hat{f} \; \mathrm{d}\mathcal{P}
& = & \left( \frac{\mathbf{K}_\pi^{-1} \mathbf{1}}{\mathbf{1}^\top \mathbf{K}_\pi^{-1} \mathbf{1}} \right)^\top \mathbf{f} . \label{eq: estimator}
\end{eqnarray}
\end{theorem}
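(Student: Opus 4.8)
The plan is to work directly from the closed form in Eqn.~\ref{eq: kernel quadrature}, namely $\int_M \hat{f} \, \mathrm{d}\mathcal{P} = \sigma^2 \mathbf{1}^\top \mathbf{K}_{\pi,\sigma}^{-1} \mathbf{f}$, and to make the $\sigma$-dependence of the matrix inverse fully explicit before sending $\sigma \to \infty$. The first and key step is to observe that, because $k_{\pi,\sigma}(\bm{x},\bm{y}) = \sigma^2 + k_\pi(\bm{x},\bm{y})$, the Gram matrix is a rank-one perturbation of $\mathbf{K}_\pi$:
\begin{eqnarray*}
\mathbf{K}_{\pi,\sigma} & = & \mathbf{K}_\pi + \sigma^2 \mathbf{1}\mathbf{1}^\top .
\end{eqnarray*}
This structural remark is what reduces the whole statement to a scalar limit, since the $\sigma$-dependence now enters only through a single rank-one term.

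Next I would invert this matrix using the Sherman--Morrison identity, which gives $\mathbf{K}_{\pi,\sigma}^{-1} = \mathbf{K}_\pi^{-1} - \sigma^2 (1 + \sigma^2 c)^{-1} \mathbf{K}_\pi^{-1} \mathbf{1}\mathbf{1}^\top \mathbf{K}_\pi^{-1}$, where I write the scalar $c := \mathbf{1}^\top \mathbf{K}_\pi^{-1} \mathbf{1}$. Left-multiplying by $\mathbf{1}^\top$ and collecting terms, the two contributions combine into $\mathbf{1}^\top \mathbf{K}_{\pi,\sigma}^{-1} = (1 + \sigma^2 c)^{-1} \mathbf{1}^\top \mathbf{K}_\pi^{-1}$. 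Substituting this back yields $\int_M \hat{f}\,\mathrm{d}\mathcal{P} = \sigma^2 (1 + \sigma^2 c)^{-1}\, \mathbf{1}^\top \mathbf{K}_\pi^{-1} \mathbf{f}$, at which point the limit is immediate: $\sigma^2/(1 + \sigma^2 c) \to 1/c$ as $\sigma \to \infty$, so the limit equals $c^{-1}\, \mathbf{1}^\top \mathbf{K}_\pi^{-1} \mathbf{f}$. Using symmetry of $\mathbf{K}_\pi$ to rewrite $\mathbf{1}^\top \mathbf{K}_\pi^{-1} = (\mathbf{K}_\pi^{-1}\mathbf{1})^\top$ then recovers exactly the right-hand side of Eqn.~\ref{eq: estimator}.

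The computation itself is routine once the rank-one decomposition is in hand, so the only substantive point to address is well-posedness: the manipulation presumes that both $\mathbf{K}_\pi$ and $\mathbf{K}_{\pi,\sigma}$ are invertible, and that the denominator $c = \mathbf{1}^\top \mathbf{K}_\pi^{-1}\mathbf{1}$ is nonzero. I expect the main (albeit mild) obstacle to be discharging these hypotheses cleanly. If $\mathbf{K}_\pi$ is assumed strictly positive definite, which holds for the distinct evaluation points $\{\bm{x}_i\}$ and a strictly positive-definite kernel $k_\pi$, then $\mathbf{K}_\pi^{-1}$ exists, $c > 0$ is strictly positive, and invertibility of $\mathbf{K}_{\pi,\sigma}$ follows since adding the positive-semidefinite term $\sigma^2\mathbf{1}\mathbf{1}^\top$ preserves positive definiteness; the Sherman--Morrison denominator $1 + \sigma^2 c$ is then likewise nonzero for every $\sigma$. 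I would therefore state positive-definiteness of $\mathbf{K}_\pi$ as the standing regularity assumption under which the limit in Eqn.~\ref{eq: estimator} is valid.
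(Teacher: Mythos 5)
Your argument is correct and is essentially the paper's own proof: the paper also writes $\mathbf{K}_{\pi,\sigma} = \sigma^2\mathbf{1}\mathbf{1}^\top + \mathbf{K}_\pi$ and applies the Woodbury identity (of which Sherman--Morrison is the rank-one case), arriving at the same intermediate expression $\mathbf{1}^\top\mathbf{K}_\pi^{-1}\mathbf{f}/(\sigma^{-2} + \mathbf{1}^\top\mathbf{K}_\pi^{-1}\mathbf{1})$ before taking the limit. Your closing remarks on positive-definiteness of $\mathbf{K}_\pi$ make explicit a regularity assumption the paper leaves implicit, but they do not change the route of the proof.
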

The estimator in Eqn. \ref{eq: estimator} is the one that is experimentally tested in Section \ref{sec: results}.
An interesting observation is that the weights $w_i$ automatically sum to unity in this approach.
Moreover, the expression $(\mathbf{1}^\top \mathbf{K}_\pi^{-1} \mathbf{1})^{-1/2}$ is exactly the worst case error of the weighted point set $\{(w_i,\bm{x}_i)\}_{i=1}^n$ in the unit ball of $H_\pi$:
\begin{eqnarray}
(\mathbf{1}^\top \mathbf{K}_\pi^{-1} \mathbf{1})^{-1/2} & = &  \sup \left\{ \left| \int_M f \mathrm{d} \mathcal{P} - \lim_{\sigma \rightarrow \infty} \int_M \hat{f} \mathrm{d}\mathcal{P} \right| \; : \; \|f\|_{H_\pi} \leq 1 \right\} \label{eq: wce} 
\end{eqnarray}
This quantity is also known as the {\it kernel Stein discrepancy} associated with the weighted point set $\{(w_i,\bm{x}_i)\}_{i=1}^n$ \citep{Chwialkowski2016,Liu2016b,Gorham2017}.
Thus a measure of (relative) integration error comes for free when the estimator is computed.
From standard duality, this expression is also the posterior standard deviation for the integral.
Of course, in practice the linear system $\mathbf{K}_\pi^{-1} \mathbf{1}$ need only be solved once, at a cost of at most $O(n^3)$.

\paragraph{Related Work}

The original work of \cite{Oates2017} considered an arbitrary vector field $\bm{\phi}$ in place of the gradient field $\nabla \phi$, and thus required only a first order differential operator.
This was possible since the coordinates of the vector field could be dealt with independently in the case of the Euclidean manifold, but this will not be possible in the case of a general manifold.
Interestingly, the second order differential operator considered here is the manifold generalisation of the operator used in the earlier work of \cite{Assaraf1999,Mira2013} and recently rediscovered in the context of Riemannian Stein variational gradient descent in \cite{Liu2017}.
The latter reference is most similar to our work, but focused on construction of a point set as opposed to the question of how to construct an estimator based on a given point set.
Other {\it Stein operators} for the Euclidean manifold were proposed in \cite{Gorham2016}.
The divergence theorem was recently also used in order to generalise the score matching method for parameter estimation on a Riemannian manifold in \cite{Mardia2016}.

\section{Numerical Assessment} \label{sec: results}

In this section we report experiments designed to assess the performance of the proposed numerical method.
In Sec. \ref{subsec: Euclidean} we return to the standard case of the Euclidean manifold $M = \mathbb{R}^d$, then in Sec. \ref{subsec: sphere} we present experiments performed on the sphere $M = \mathbb{S}^2$.
Last, in Sec. \ref{subsec: Application} we applied the proposed method to an Australian Mesozoic paleomagnetic pole model where the manifold was $M = \mathbb{S}^2 \times \mathbb{R}_+$.

\subsection{Euclidean Manifold} \label{subsec: Euclidean}

First, we considered the Euclidean manifold $M = \mathbb{R}^d$.
This was for two reasons; first, to expose the proposed construction in a familiar context, and second, to determine whether the use of a second order differential operator leads to any substantive differences relative to earlier work.

\paragraph{Differential Operator}

For $M = \mathbb{R}^d$, we have a global parametrisation $\bm{q} = \bm{x}$ and the natural volume form is the Lebesgue measure; $\mathrm{d} V = \mathrm{d} \lambda^d$.
For simplicity, suppose that either $\pi$ vanishes on $\partial M$ or $M = \mathbb{R}^d$.
Then our method involves the second order differential operator
\begin{eqnarray*}
L_\pi(\phi) \; = \; \frac{\nabla \cdot (\pi \nabla \phi)}{\pi} \; = \; \frac{\nabla \pi}{\pi} \cdot \nabla \phi + \Delta \phi
\end{eqnarray*}
where $\nabla$ is the familiar gradient.
For the case where $M$ is bounded, let $\bm{n}(\bm{x})$ denote the unit normal to $\partial M$.
Then from the Euclidean version of the divergence theorem:
\begin{eqnarray*}
\int_M L_\pi(\phi) \; \mathrm{d} \mathcal{P} & = & \int_{\partial M} \pi(\bm{x}) \nabla \phi(\bm{x}) \cdot \bm{n}(\bm{x}) \; \mathrm{d}\lambda^d(\bm{x}) \\
& = & \int_{\partial M} 0 \; \mathrm{d}\lambda^d(\bm{x}) \quad = \quad 0.
\end{eqnarray*}
The equality is immediate for the case $M = \mathbb{R}^d$.

This is to be contrasted with the earlier work of \cite{Oates2017}, which considered a general vector field $\bm{\phi} : M \rightarrow \mathbb{R}^d$ and the first order differential operator $L_\pi^1 (\bm{\phi}) = \frac{1}{\pi} \nabla \cdot (\pi \bm{\phi})$.
From there, \cite{Oates2017} proceed as we have already described, with $\bm{\phi} \in \Phi \times \dots \times \Phi$ the tensor product of $d$ copies of the RKHS $\Phi$.
Note that $L_\pi^1$ implicitly relies on the Euclidean structure of the manifold and cannot be general.

\paragraph{Choice of Kernel}

If $\alpha \in \mathbb{N} + \frac{1}{2}$ then the Mat\'{e}rn kernel 
\begin{eqnarray*}
k(\bm{x},\bm{y}) & = & \lambda^2 \exp\left( - \frac{\sqrt{2\alpha} \|\bm{x} - \bm{y}\|}{\ell} \right) \frac{\Gamma(\alpha + \frac{1}{2})}{\Gamma(2 \alpha)} \sum_{i=0}^{\alpha - \frac{1}{2}} \frac{(\alpha - \frac{1}{2} + i)!}{i! (\alpha - \frac{1}{2} - i)!} \left( \frac{\sqrt{8\alpha}\|\bm{x} - \bm{y}\|}{\ell} \right)^{\alpha - \frac{1}{2} - i} 
\end{eqnarray*}
with parameters $\lambda , \ell > 0$ reproduces the Sobolev space $\Phi = H^\alpha(\mathbb{R}^d)$ \citep[see e.g.][]{Fasshauer2007}.
In order for $L_\pi \Phi$ to be well-defined, we require that elements of $\Phi$ are twice (weakly) differentiable; hence we require that $\alpha > 2$.
In contrast, for $L_\pi^1$ to be well defined we have the weaker requirement that $\alpha > 1$.

\paragraph{Experimental Results}

For a transparent and reproducible test bed, let $\mathcal{P}$ be the standard Gaussian in $\mathbb{R}^d$ and suppose that we are told $\pi(\bm{x}) = \exp(- \frac{1}{2} \|\bm{x}\|^2 )$.
That is, we pretend that the normalisation constant $(2 \pi)^{\frac{d}{2}}$ is unknown and proceed as described.
For the kernel we fixed\footnote{All kernel parameters were fixed at sensible default values in this work, but optimisation of these parameters can be expected to offer improvement.} $\lambda = 1$, $\ell = 1$ and considered values $\alpha \in \{\frac{3}{2}, \frac{5}{2} , \frac{7}{2}\}$. 
For the points $\{\bm{x}_i\}_{i=1}^n$ we consider three scenarios:
\begin{enumerate}
\item independent, unbiased draws $\bm{x}_i \sim \mathcal{N}(\mathbf{0},\mathbf{I})$
\item independent, biased draws $\bm{x}_i \sim \mathcal{N}(\mathbf{1},3\mathbf{I})$
\item (for $d=1$) stratified points $\bm{x}_i = $ the $\frac{i}{n+1}$th percentile of $\mathcal{N}(0,1)$
\end{enumerate}
For each point set we computed the worst case error in Eqn. \ref{eq: wce}.
(For scenarios 1 and 2 we report the mean worst case error obtained over 100 independent realisations of the random points.)
Both differential operators $L_\pi$ and $L_\pi^1$ were considered, although the former is incompatible with $\alpha = \frac{3}{2}$.
All results were obtained using MATLAB R2017b, with symbolic differentiation exploited to compute all kernels $k_\pi$.
Further details are provided in Sec. \ref{ap: sym diff} of the Supplement.

\begin{figure}[t!]
\centering
\includegraphics[width = 0.33\textwidth,clip,trim = 6cm 10.5cm 6cm 11cm]{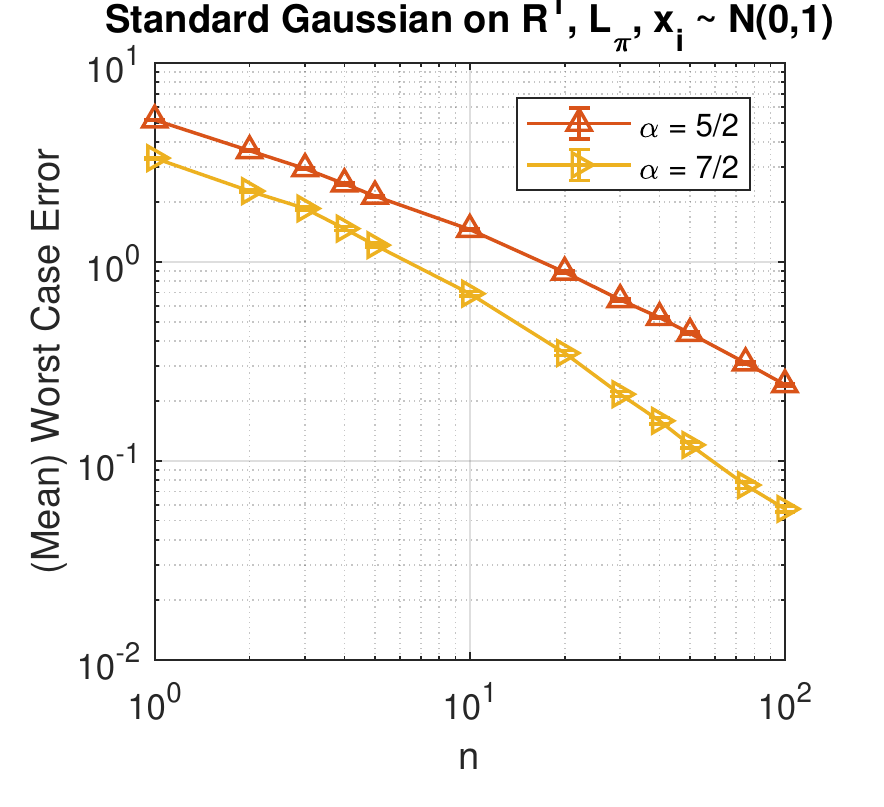}
\includegraphics[width = 0.33\textwidth,clip,trim = 6cm 10.5cm 6cm 11cm]{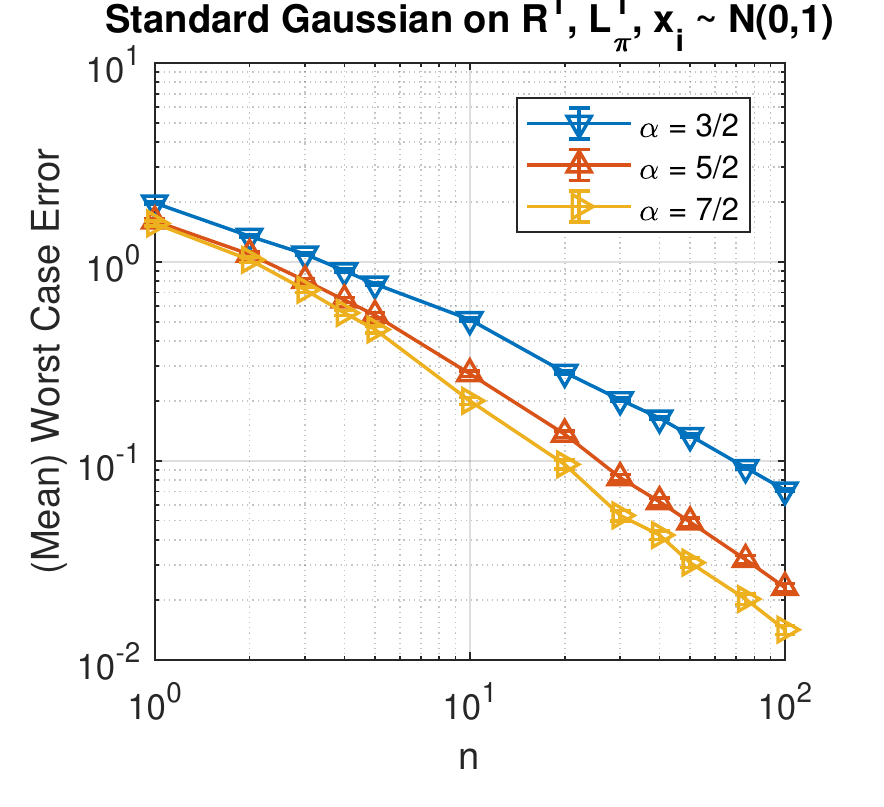}

\includegraphics[width = 0.33\textwidth,clip,trim = 6cm 10.5cm 6cm 11cm]{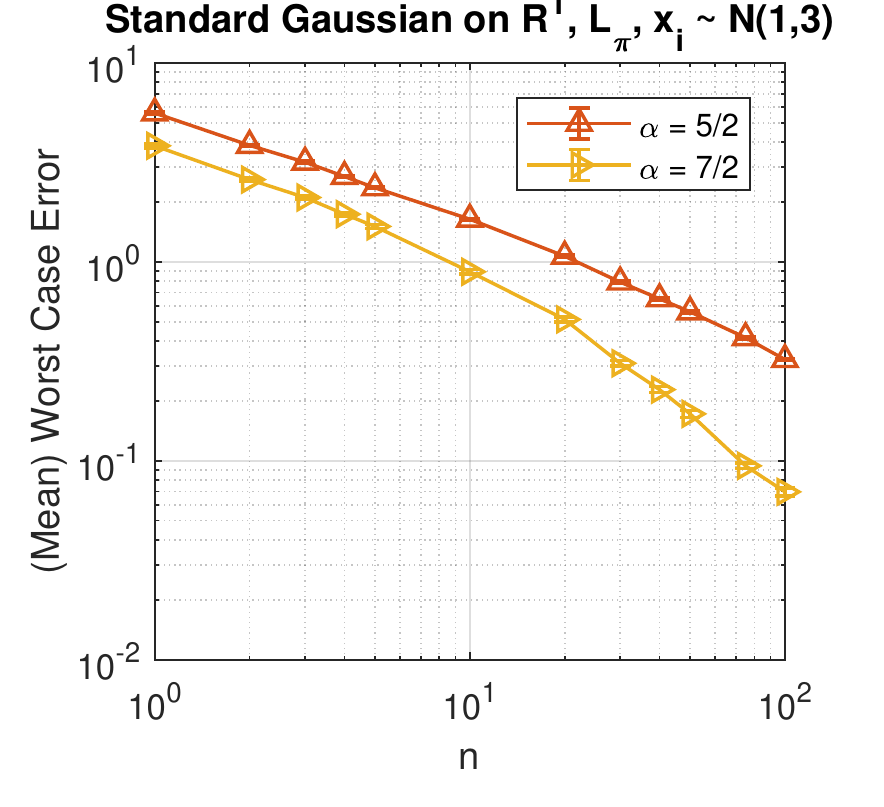}
\includegraphics[width = 0.33\textwidth,clip,trim = 6cm 10.5cm 6cm 11cm]{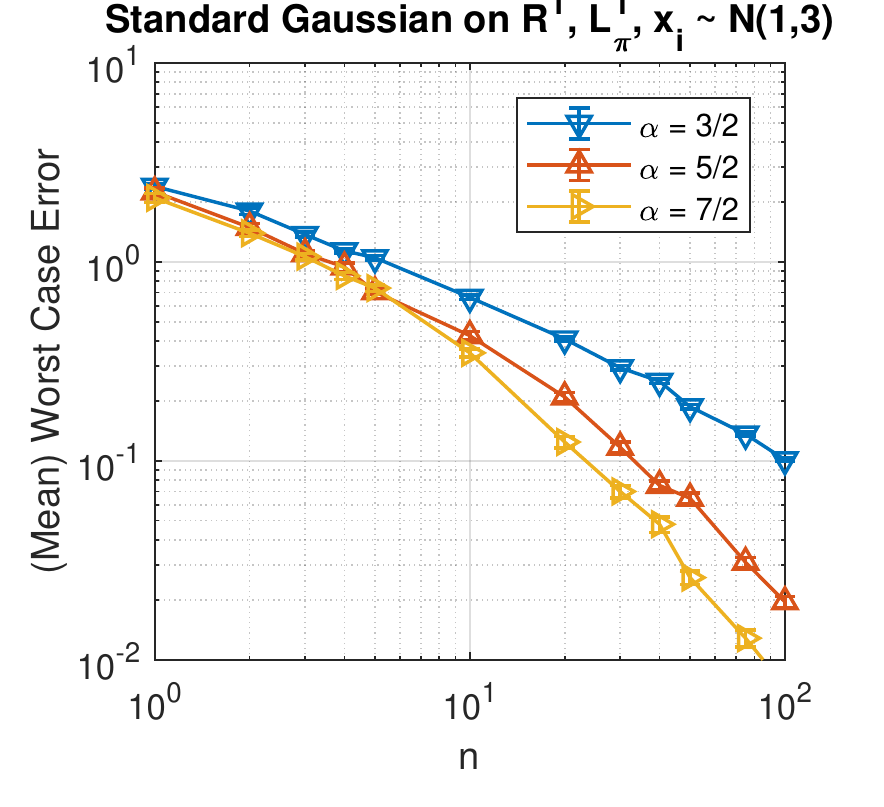}

\includegraphics[width = 0.33\textwidth,clip,trim = 6cm 10.5cm 6cm 11cm]{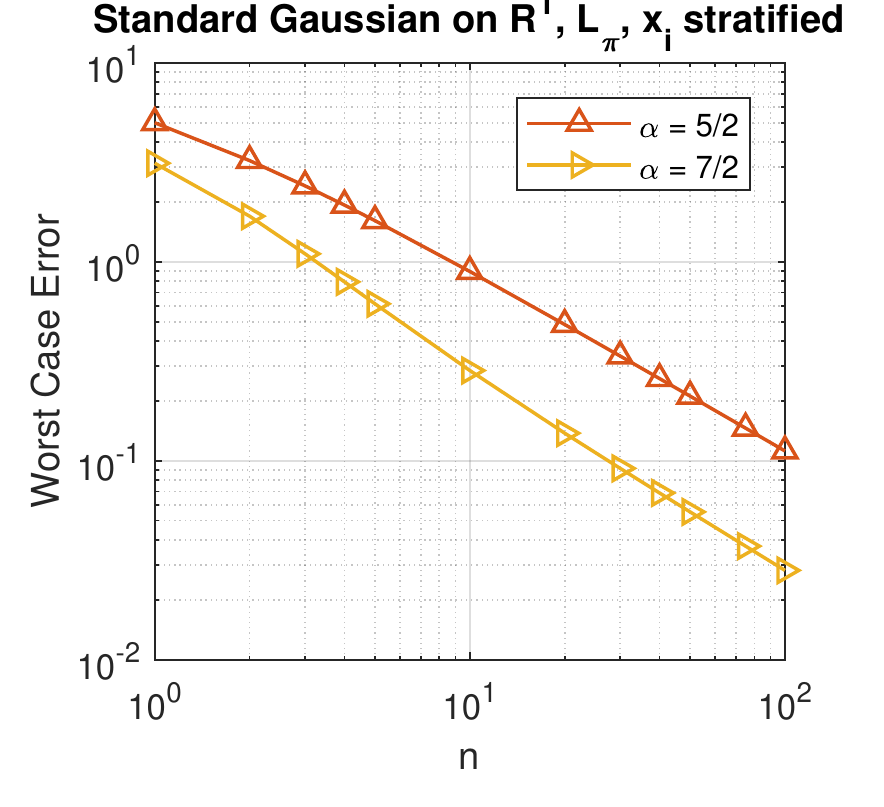}
\includegraphics[width = 0.33\textwidth,clip,trim = 6cm 10.5cm 6cm 11cm]{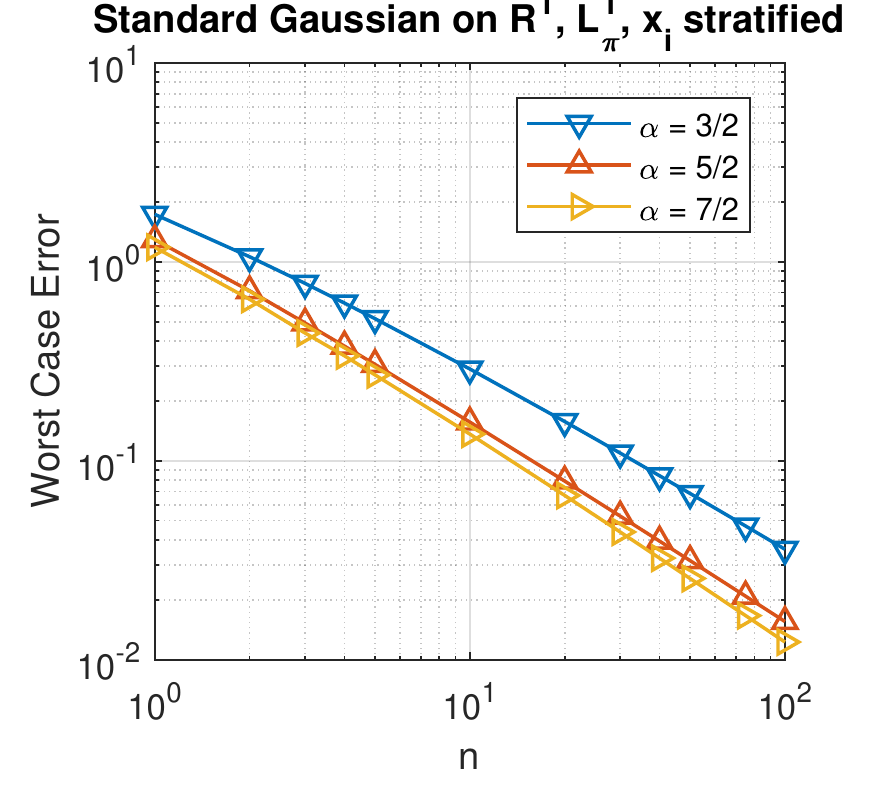}

\caption{Results for the standard Gaussian on the Euclidean manifold in dimension $d = 1$.
The worst case error of the proposed method (Eqn. \ref{eq: wce}) was plotted for various $\alpha$, controlling the smoothness of the kernel, and various $n$, the number of evaluations of the integrand.
[Top row: The points $\bm{x}_i \sim \mathcal{N}(0,1)$ were drawn from the target.
Middle row: The points $\bm{x}_i \sim \mathcal{N}(1,3)$ were drawn from an incorrect distribution.
Bottom row: The points $\bm{x}_i$ were stratified on the percentiles of the target.
Left column: The differential operator $L_\pi$ considered in this work.
Right column: The differential operator $L_\pi^1$ considered in earlier work.]
}
\label{fig: Euclidean results 1}
\end{figure}

Results, in Fig. \ref{fig: Euclidean results 1} for $d=1$ and in Fig. \ref{fig: Euclidean results 2} in the Supplement for $d = 2$, showed that larger smoothness $\alpha$ leads to faster decay of worst case error.
In particular, we empirically observe convergence rates of $o(n^{-\frac{1}{2}})$.
In the case of the operator $L_\pi^1$ studied on the Euclidean manifold in \cite{Oates2018} (right column), it was proven (under some additional assumptions) that the worst case error decreases at $O(n^{-\frac{\alpha - 1}{d}})$ when $\pi$ is smooth, essentially because the kernel $k_{\pi,\sigma}$ has derivatives of order $\alpha - 1$.
This is consistent with the experimental results in Fig. \ref{fig: Euclidean results 1}.
A small extension of the theoretical methods used in \cite{Oates2018} gives a corresponding rate for the operator $L_\pi$ of $O(n^{-\frac{\alpha - 2}{d}})$, since $L_\pi$ is a second order differential operator.
The results in the left column of Fig. \ref{fig: Euclidean results 1} bear out this conjecture, with slower convergence of the worst case error for fixed $\alpha$ compared to the right column.

There was only a small difference between the worst case error in the first scenario ($\bm{x}_i \sim \mathcal{N}(0,1)$, top row) compared to the second scenario ($\bm{x}_i \sim \mathcal{N}(1,3)$, middle row).
This clearly illustrates the property that the points $\{\bm{x}_i\}_{i=1}^n$ need not form an approximation to $\mathcal{P}$ in the proposed method.
On the other hand, the stratified points (bottom row) appeared to mitigate the transient phase before the linear asymptotics kicks in, compared to the use of Monte Carlo points, and should be preferred.

In dimension $d=2$ the worst case error decays more slowly, consistent with the rates just conjectured.
Moreover, the asymptotic advantage of larger $\alpha$ is not clearly seen for $n \leq 10^2$ so that the transient phase appears to last for longer.
This is consistent with the well-known curse of dimension for isotropic kernel methods.

To investigate the robustness of the proposed method when the integrand $f$ is not well-approximated by elements in $H_{\pi,\sigma}$, we also report absolute integration errors in Fig. \ref{fig: Euclidean results 3} of the Supplement.
These additional experiments are brief, since they closely mirror the experiments conducted in \cite{Oates2018}.

\subsection{The Sphere $\mathbb{S}^2$} \label{subsec: sphere}

Next we considered arguably the most important non-Euclidean manifold; the sphere $\mathbb{S}^2$.

\paragraph{Differential Operator}

Recall that the coordinate map $\nu$ from Sec. \ref{sec: background} can be used to compute the metric tensor
$$
\mathrm{G} = \left( \begin{array}{cc} \sin^2 q_2 & 0 \\ 0 & 1 \end{array} \right)
$$
and a natural volume element $\mathrm{d}V = \sin q_2 \; \mathrm{d}q_1 \mathrm{d} q_2$.
It follows that, for a function $\phi : \mathbb{S}^2 \rightarrow \mathbb{R}$, we have the gradient differential operator
\begin{eqnarray*}
\nabla \phi & = & \frac{1}{\sin^2 q_2} \frac{\partial \phi}{\partial q_1} \partial_{q_1} + \frac{\partial \phi}{\partial q_2} \partial_{q_2} .
\end{eqnarray*}
Similarly, for a vector field $\underline{\bm{s}} = s_1 \partial_{q_1} + s_2 \partial_{q_2}$, we have the divergence operator
\begin{eqnarray*}
\nabla \cdot \underline{\bm{s}} & = & \frac{\partial s_1}{\partial q_1} + \frac{\partial s_2}{\partial q_2} + \frac{\cos q_2}{\sin q_2} s_2 .
\end{eqnarray*}
Thus the linear operator $L_\pi$ that we consider is:
\begin{eqnarray}
L_{\pi} (\phi) & = & \frac{\cos q_2}{\sin q_2} \frac{\partial \phi}{\partial q_2} + \frac{1}{\sin^2 q_2} \left\{ \frac{1}{\pi} \frac{\partial \pi}{\partial q_1} \frac{\partial \phi}{\partial q_1} + \frac{\partial^2 \phi}{\partial q_1^2} \right\}  + \left\{ \frac{1}{\pi} \frac{\partial \pi}{\partial q_2} \frac{\partial \phi}{\partial q_2} + \frac{\partial^2 \phi}{\partial q_2^2} \right\} . \label{eq: Lpi for S2}
\end{eqnarray}
Turning this into expressions in terms of $\bm{x}$ requires that we notice
$$
\frac{\cos q_2}{\sin q_2} = \frac{x_3}{\sqrt{1 - x_3^2}}, \qquad \frac{1}{\sin^2 q_2} = \frac{1}{1 - x_3^2}
$$
and use chain rule for partial differentiation (see Sec. \ref{ap: partial diff} in the Supplement).
This manifold-specific portion of MATLAB code is presented in Fig. \ref{fig: matlab 2} in the Supplement.

\paragraph{Choice of Kernel}

To proceed we require a reproducing kernel $k$ defined on $\mathbb{S}^2$.
To this end, Proposition 5 of \cite{Brauchart2013} establishes that, for $\alpha \in \mathbb{N} + \frac{1}{2}$, the kernel
\begin{eqnarray}
k(\bm{x},\bm{y}) \hspace{-5pt} & = & \hspace{-5pt} C^{(1)} {}_3 F_2\left[ \begin{array}{cc} \frac{3}{2}-\alpha, 1 - \alpha, \frac{3}{2} - \alpha \\ 2 - \alpha , 3 - \frac{m}{2} - 2\alpha \end{array} ; \frac{1 - \bm{x}\cdot\bm{y}}{2} \right] + C^{(2)}  \|\bm{x} - \bm{y}\|^{2\alpha-2}, \label{eq: S2 kernel}
\end{eqnarray}
defined for $\bm{x},\bm{y} \in \mathbb{S}^m$, reproduces the Sobolev space $H^\alpha(\mathbb{S}^m)$.
Here ${}_pF_q$ is  the generalised hypergeometric function.
The expressions for the constants $C^{(1)}$ and $C^{(2)}$ are given in Sec. \ref{ap: C1C2} of the Supplement.
This kernel was used in our experiments, with values $\alpha \in \{\frac{7}{2},\frac{9}{2},\frac{11}{2}\}$ considered.

\paragraph{Experimental Results}

\begin{figure}[t!]
\centering
\includegraphics[width = 0.8\textwidth,clip,trim = 5cm 4cm 5cm 1cm]{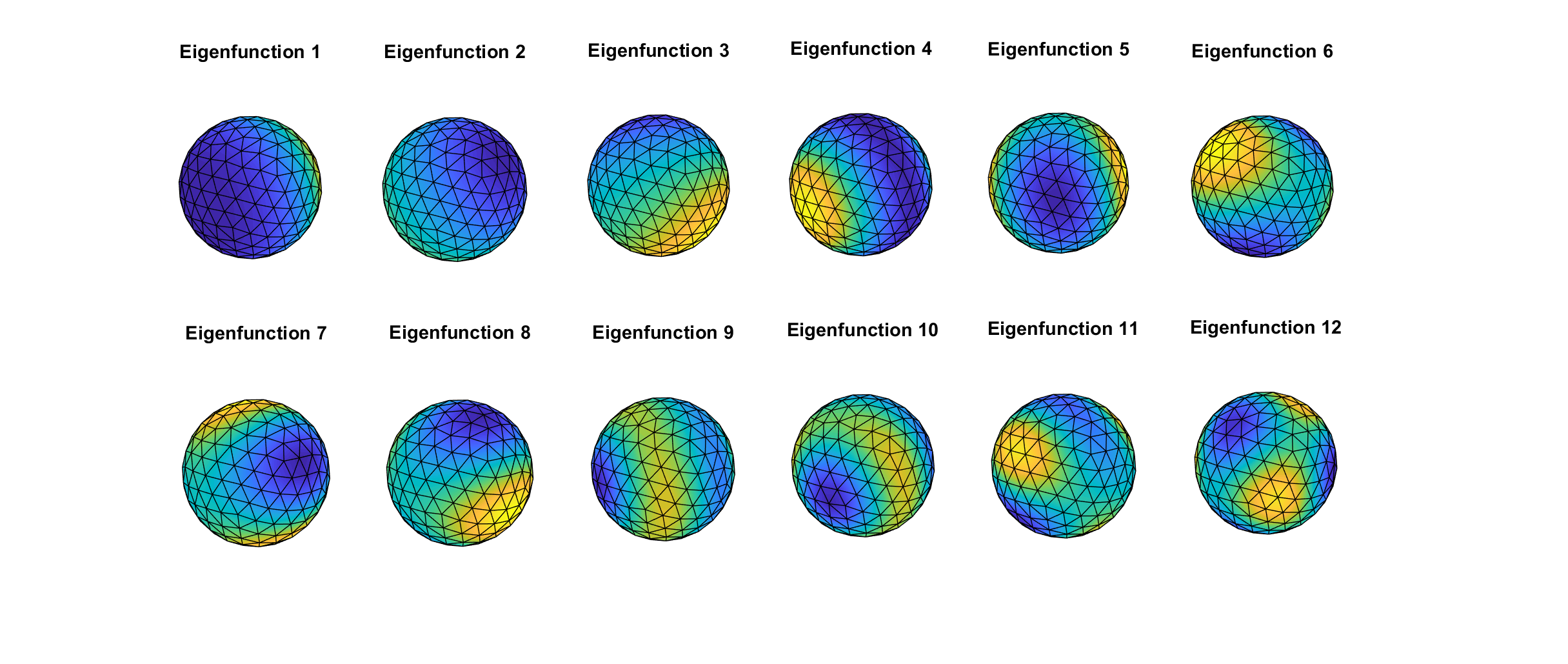}
\caption{The first 12 eigenfunctions of the kernel $k_\pi$ for a von Mises-Fisher distribution on $\mathbb{S}^2$.}
\label{fig: S2 eigenfunctions}
\end{figure}

To illustrate the method, considered the von Mises-Fisher distribution $\mathcal{P}$ whose density with respect to $\mathrm{d}V$ is
$$
\pi_{\mathcal{P}}(\bm{x}) = \frac{\|\bm{c}\|_2}{4 \pi \; \text{sinh}(\|\bm{c}\|_2)} \exp(\bm{c}^\top \bm{x}).
$$
For illustration we suppose that the normalisation constant is unknown and we have access only to $\pi(\bm{x})= \exp( \bm{c}^\top \bm{x} )$.
Thus we proceed to construct the differential operator operator $L_\pi$ as previously described.
To gain intuition as to the reasonableness of the resulting $k_\pi$, we fixed $\alpha = \frac{7}{2}$ and plotted the first few eigenfunctions of $k_\pi$ for the target distribution $\pi_{\mathcal{P}}$ defined by $\bm{c} = (1,0,0)^\top$.
These are shown in Fig. \ref{fig: S2 eigenfunctions}.
Note that, in simple visual terms, these seem like a reasonable basis in which to perform function approximation on $\mathbb{S}^2$.

Next, we considered the performance of the proposed integration method.
For various values of $n$, we obtained $n$ points $\{\bm{x}_i\}_{i=1}^n$ that were quasi-uniformly distributed on $\mathbb{S}^2$, being obtained by minimising a generalised electrostatic potential energy \citep[Reisz's-energy;][]{Semechko2015}.
Note that these points, being uniform, do not form an approximation to $\mathcal{P}$.
For each point set we computed the worst case error in Eqn. \ref{eq: wce}.

Results in Fig. \ref{fig: non-Euclid} (left) showed that the worst case error decays more rapidly with larger $\alpha$.
Again, we empirically observe convergence rates of $o(n^{-\frac{1}{2}})$.
Although it is not possible to accurately read off asymptotic rates from these results, they are somewhat consistent with a convergence rate of $O(n^{-\frac{\alpha - 2}{d}})$ where $d = 2$.

To investigate the robustness of the proposed method when the integrand $f$ is not itself an element of $H_{\pi,\sigma}$, we also report absolute integration errors in Sec. \ref{ap: S2 numerics extra} of the Supplement.

\begin{figure}[t!]
\centering
\includegraphics[width = 0.33\textwidth,clip,trim = 6cm 10.5cm 6cm 11cm]{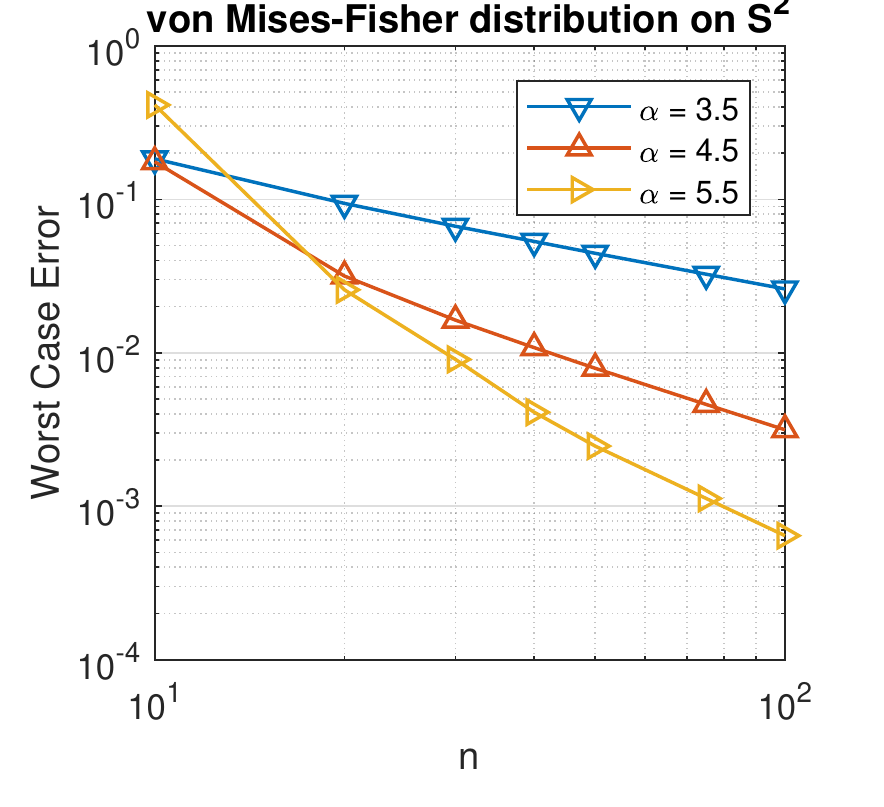} 
\includegraphics[width = 0.33\textwidth,clip,trim = 6cm 10.5cm 6cm 11cm]{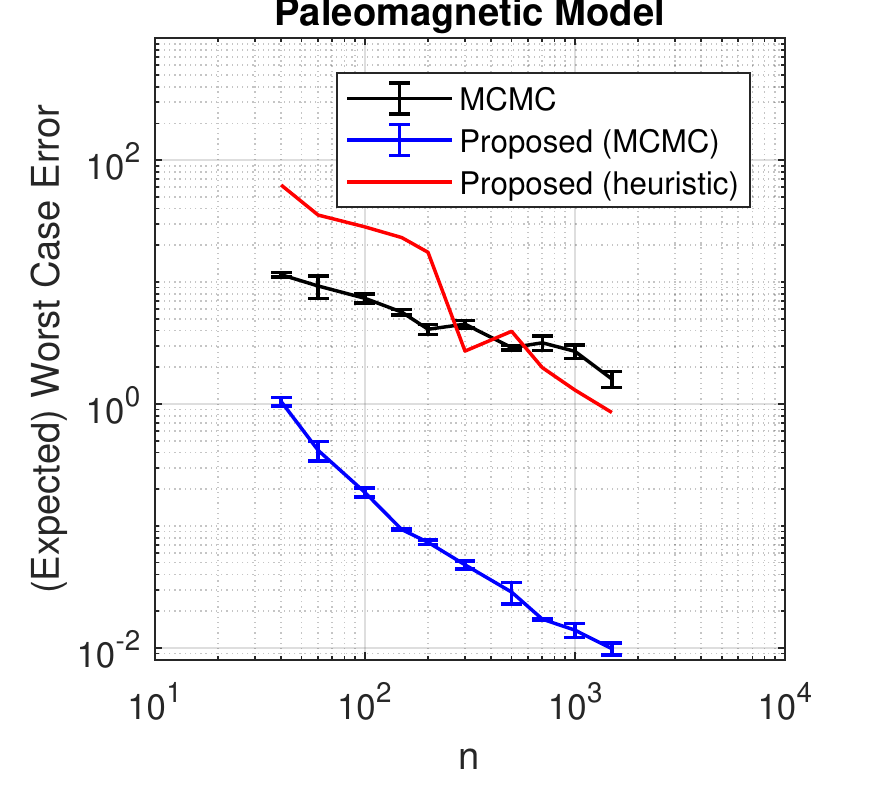} 

\caption{Results for non-Euclidean manifolds.
Left: The von Mises-Fisher distribution on $\mathbb{S}^2$.
[The worst case error of the proposed method (Eqn. \ref{eq: wce}) was plotted for various $\alpha$, controlling the smoothness of the kernel, and various $n$, the number of evaluations of the integrand.
The points $\bm{x}_i$ were quasi-uniform over $\mathbb{S}^2$ and the differential operator $L_\pi$ was used.]
Right: Results for the paleomagnetic model.
[The worst case error of the proposed method (Eqn. \ref{eq: wce}) was plotted for  various $n$, the number of evaluations of the integrand.
Black: The ergodic average from MCMC.
Blue: The proposed method based on the same MCMC point set.
Red: The proposed method based on a heuristic, whereby points were approximately stratified on the manifold.]}
\label{fig: non-Euclid}
\end{figure}

\subsection{Australian Mesozoic Paleomagnetic Pole Model} \label{subsec: Application}

The data that we considered consists of $n=33$ independent estimates $\{\bm{y}_i\}_{i=1}^n \subset \mathbb{S}^2$ for the position of the Earth's historic magnetic pole \citep[Table 2 of][]{Schmidt1976}.
The task is to aggregate these estimates; for details see \cite{Paine2017}.
The statistical model considered uses the von Mises-Fisher distribution as a likelihood:
\begin{eqnarray*}
\mathcal{L}(\bm{\mu} , \kappa) & \propto & \prod_{i=1}^n \frac{\kappa^{1/2}}{I_{\frac{1}{2}}(\kappa)} \exp(\kappa \bm{\mu}^\top \bm{y}_i)
\end{eqnarray*}
where $I_\eta$ is the modified Bessel function of the first kind of order $\eta$, $\bm{\mu} \in \mathbb{S}^2$ is a mean parameter and $\kappa \in \mathbb{R}_+$ is a concentration parameter.
Here the density is given with respect to the natural volume element $\mathrm{d}V$ on $\mathbb{S}^2$.
A Bayesian analysis proceeds under a conjugate prior
\begin{eqnarray*}
\pi_0(\bm{\mu} , \kappa) & \propto & \left(\frac{\kappa^{1/2}}{I_{\frac{1}{2}}(\kappa)}\right)^c \exp(R_0 \kappa \bm{\mu}_0^\top \bm{\mu})
\end{eqnarray*}
for hyper-parameters $\bm{\mu}_0 \in \mathbb{S}^2$, $c, R_0 \in \mathbb{R}_+$, given with respect to the natural volume element $\mathrm{d}V$ on the manifold $M = \mathbb{S}^2 \times \mathbb{R}_+$.
The task is to compute expectations under the posterior
\begin{eqnarray*}
\pi_{\mathcal{P}}(\bm{\mu},\kappa | \{\bm{y}_i\}_{i=1}^n) & \propto & \left( \frac{\kappa^{1/2}}{I_{\frac{1}{2}}(\kappa)} \right)^{c+n} \exp(R_n \kappa \bm{\mu}_n^\top \bm{\mu}),
\end{eqnarray*}
where 
\begin{eqnarray*}
R_n = \left\|R_0 \bm{\mu}_0 + \sum_{i=1}^n \bm{y}_i \right\|, \; \bm{\mu}_n = R_n^{-1} \left( R_0 \bm{\mu}_0 + \sum_{i=1}^n \bm{y}_i \right).
\end{eqnarray*}
In particular, we attempted to estimate the first and second moments of $\bm{\mu}$ and $\kappa$, so that $f(\bm{\mu},\kappa) = \mu_i^j$ for $i \in \{1,2,3\}$, or $f(\bm{\mu},\kappa) = \kappa^j$, in each case for $j \in \{1,2\}$.
The most default prior with $c = 0$, $R_0 = 0$ was employed, following \cite{Nunez-Antonio2005}.
Owing to the smoothness of these test functions, a version of the exponentiated-quadratic kernel was employed; see the Supplement for full detail.

Points $\bm{x}_i = (\bm{\mu}_i,\kappa_i)$ were sampled from the posterior via random-walk MCMC.
The usual ergodic average was used to estimate each integral and the result was compared to the estimate provided by the proposed method based on the same point set.
Results in Figs. \ref{fig: non-Euclid} (right) showed that the proposed method strongly outperformed the ergodic average for integrands in the unit ball of $H_\pi$.
In Fig. \ref{fig: pole res 2} the proposed method was seen to performed well for the test functions $\mu_i^j$, but delivered similar performance to the ergodic average for the test functions $\kappa^j$.
This may be because the latter were not well-approximated by elements of $H_\pi$.

For interest, we also considered a heuristic approach where points were approximately uniformly stratified, as described in Sec. \ref{ap: paleo supp sec} of the Supplement.
Note that these points do not form an empirical approximation to the posterior; however, for the proposed method this is not required.
Results in Fig. \ref{fig: non-Euclid} (right) showed similar performance to MCMC.

\begin{figure}[t!]
\centering
\includegraphics[width = 0.8\textwidth,clip,trim = 1.7cm 9cm 1.8cm 9cm]{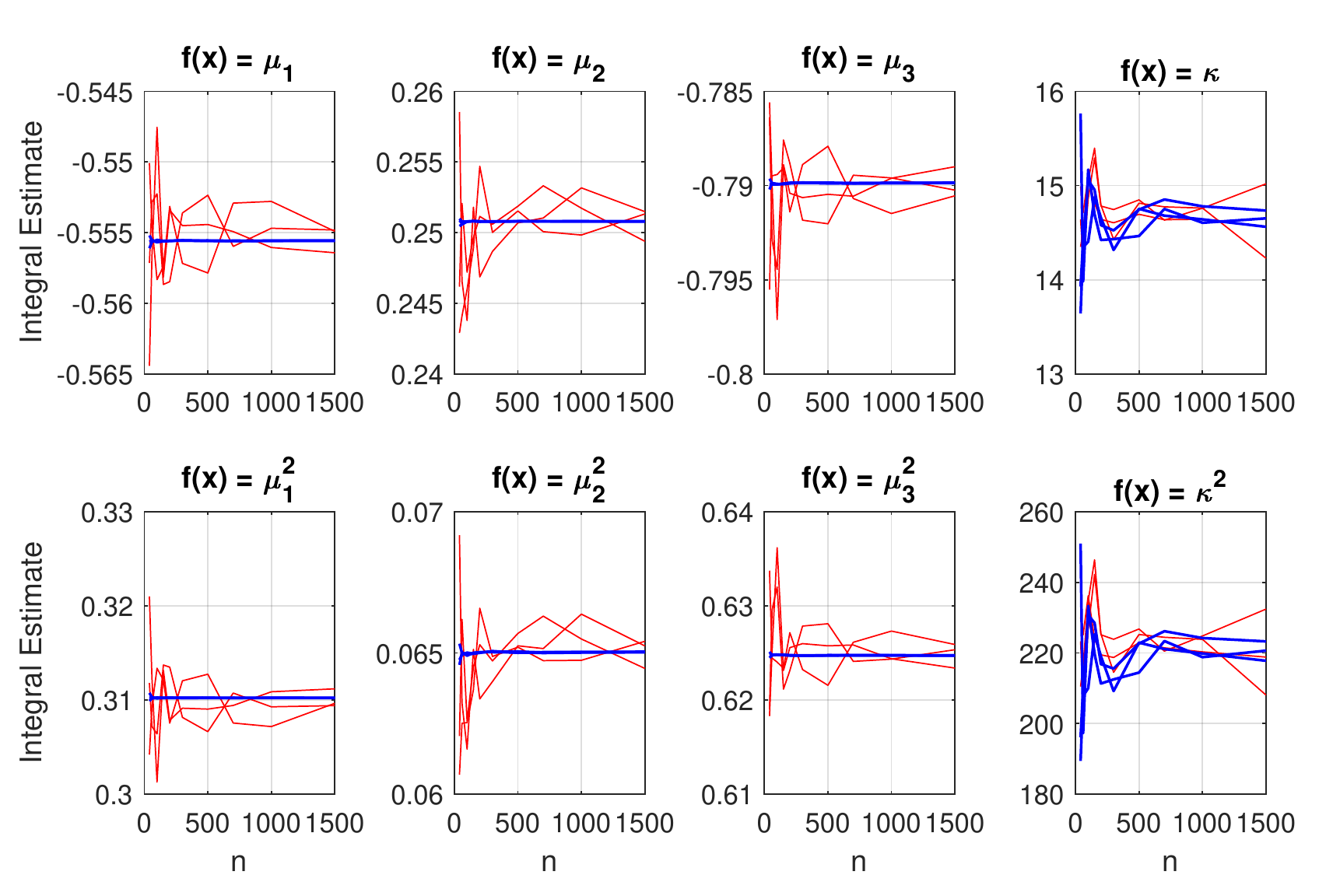}
\caption{Results for the paleomagnetic model.
Integral estimates were obtained and plotted for various $n$, the number of evaluations of the integrand.
[Red: The ergodic average from MCMC.
Blue: The proposed method based on the same MCMC point set.
Results from 3 independent chains are shown; in most panels the proposed method has indistinguishable variation, so the blue curves cannot be distinguished.]
}
\label{fig: pole res 2}
\end{figure}

\section{Discussion} \label{sec: discussion}

This paper generalised of the method of \cite{Oates2017} to a general oriented Riemannian manifold.
The method was illustrated for regular integrals of modest dimension; as usual, the case of high-dimensional manifolds (i.e. $m$ large) is likely to challenge any regression-based method unless strong assumptions can be made on the integrand.
Three open theoretical questions are: (1) How expressive is the function class $H_{\pi,\sigma}$ in terms of approximation properties? (2) How quickly do these estimators converge under particular regularity assumptions on the integrand? (3) How robust are these estimators when the function space assumptions are violated?
The ultimate success of this method will hinge on the extent to which these important questions can be addressed.

\section*{Acknowledgements}

CJO and MG were supported by the Lloyd's Register Foundation programme on data-centric engineering at the Alan Turing Institute, UK.
AB was supported by a Roth scholarship from the Department of Mathematics at Imperial College London, UK.
MG was supported by the EPSRC grants [EP/K034154/1, EP/R018413/1, EP/P020720/1, EP/L014165/1], and an EPSRC Established Career Fellowship, [EP/J016934/1]. 
CJO is grateful for discussions with Chang Liu.

\newpage
\appendix

\section{Supplement}

This electronic supplement contains several theoretical details and empirical results that were referenced in the paper \emph{Posterior Integration on a Riemannian Manifold}.

\subsection{Proofs} \label{ap: proof sec}

This section contains proofs of Theorems \ref{thm: main} and \ref{thm: limit} from the main text.

\begin{proof}[Proof of Theorem \ref{thm: main}]
By definition, $\pi = \mathcal{L} \pi_0$ and so we have that $Z \mathrm{d}\mathcal{P} = \pi \mathrm{d}V$.
Thus
\begin{eqnarray*}
\int_M L_\pi(\phi) \mathrm{d} \mathcal{P} & = & \int_M L_\pi(\phi) \frac{\pi}{Z} \mathrm{d}V \\
& = & \int_M \frac{\nabla \cdot (\pi \nabla \phi)}{\pi} \frac{\pi}{Z} \mathrm{d}V \\
& = & \frac{1}{Z} \int_M \nabla \cdot (\pi \nabla \phi) \mathrm{d}V 
\end{eqnarray*}
where $\mathrm{d}V$ is the natural volume form on $M$.
From the divergence theorem we have that:
\begin{eqnarray*}
\int_M \nabla \cdot (\pi \nabla \phi) \mathrm{d}V & = & \int_{\partial M} \langle \pi \nabla \phi , \underline{\bm{n}} \rangle_{\mathrm{G}} \; i_{\underline{\bm{n}}} \mathrm{d}V 
\end{eqnarray*}
which establishes the result that was claimed.
\end{proof}

\begin{proof} [Proof of Theorem \ref{thm: limit}]
Note that $\mathbf{K}_{\pi,\sigma} = \sigma^2 \mathbf{1} \mathbf{1}^\top + \mathbf{K}_\pi$.
The proof is then an application of the Woodbury matrix inversion formula, which can be used to deduce that
\begin{eqnarray*}
\lim_{\sigma \rightarrow \infty} \int_M \hat{f} \mathrm{d}\mathcal{P} & = & \lim_{\sigma \rightarrow \infty} \sigma^2 \mathbf{1}^\top (\sigma^2 \mathbf{1} \mathbf{1}^\top + \mathbf{K}_\pi)^{-1} \mathbf{f} \\
& = & \lim_{\sigma \rightarrow \infty} \frac{\mathbf{1}^\top \mathbf{K}_\pi^{-1} \mathbf{f}}{\sigma^{-2} + \mathbf{1}^\top \mathbf{K}_\pi^{-1} \mathbf{1}} \\
& = & \left( \frac{\mathbf{K}_\pi^{-1} \mathbf{1}}{\mathbf{1}^\top \mathbf{K}_\pi^{-1} \mathbf{1}} \right)^\top \mathbf{f}
\end{eqnarray*}
as required.
\end{proof}

\subsection{Alternative Differential Operators} \label{ap: stein operators}

Alternatives to the differential operator $L_\pi$ can also be considered.
To this end, recall vector fields $\underline{\bm{s}} = s_1 \partial_{q_1} + \dots + s_m \partial_{q_m}$ are differential operators, so that we can consider the directional derivative of a function $f : M \rightarrow \mathbb{R}$ in the direction $\underline{\bm{s}}$, denoted $\underline{\bm{s}}(f) = s_1 \partial_{q_1} f + \dots + s_m \partial_{q_m} f$. 
Now, note that $\nabla \cdot (f \underline{\bm{s}}) =  \underline{\bm{s}}(f) + f \nabla \cdot \underline{\bm{s}}$. 
In particular, if $\underline{\bm{s}} = \nabla h$, then $\nabla \cdot (f \nabla h) = f \Delta h + (\nabla h) (f) $. 

From the above identities we have that, for a closed manifold $M$, vector field $\underline{\bm{s}}$ and function $f$,
$$
\int_M \underline{\bm{s}}(f) +f \nabla  \cdot \underline{\bm{s}} \; \mathrm{d}V =0.
$$
The operator $L_{\pi}$ in the main text is thus the special case with $\underline{\bm{s}} = \pi \nabla \phi$ and $f=1$. Another possibility is $f=\pi$ and $\underline{\bm{s}}=\nabla \phi$:
$$
\int_M (\nabla \phi)(\pi) +\pi \Delta \phi \; \mathrm{d}V = 0.
$$
Similarly, we also have Green's identity
$$
\int_M  f \Delta g - g \Delta f \; \mathrm{d}V = 0  
$$
In particular, if $f$ satisfies the Poisson equation $\Delta f = \rho$, then 
$$ 
\int_M  f \Delta g - g\rho \; \mathrm{d}V = 0
$$
and if $f$ is harmonic ($\Delta f=0$), then 
\begin{eqnarray}
\int_M f \Delta g \; \mathrm{d}V = 0 . \label{eq: ab alternative}
\end{eqnarray}
This suggests other possible differential operators, for example if we take $g := \pi \phi$ in Eqn. \ref{eq: ab alternative} we obtain a differential operator $\tilde{L}_{\pi}(\phi) = \frac{f \Delta(\pi \phi)}{\pi}$ for any harmonic $f$. Of course, any linear combination of the above operators integrates to $0$ as well.

In this work we do not exhibit a criteria under which one operator may be preferred to another, although we note that the related discussion in \cite{Gorham2016} may be relevant.
However, a computational preference should clearly be afforded to differential operators that are of lower order; for this reason we would prefer $L_\pi$ in the main text to $\tilde{L}_\pi$ above, as the former requires only first order derivatives of $\pi$.

\subsection{Additional Results for the Euclidean Manifold}

The experiment presented in the main text was repeated in dimension $d = 2$ and the results are presented in Fig. \ref{fig: Euclidean results 2}.

\begin{figure}[h!]
\centering
\includegraphics[width = 0.33\textwidth,clip,trim = 6cm 10.5cm 6cm 11cm]{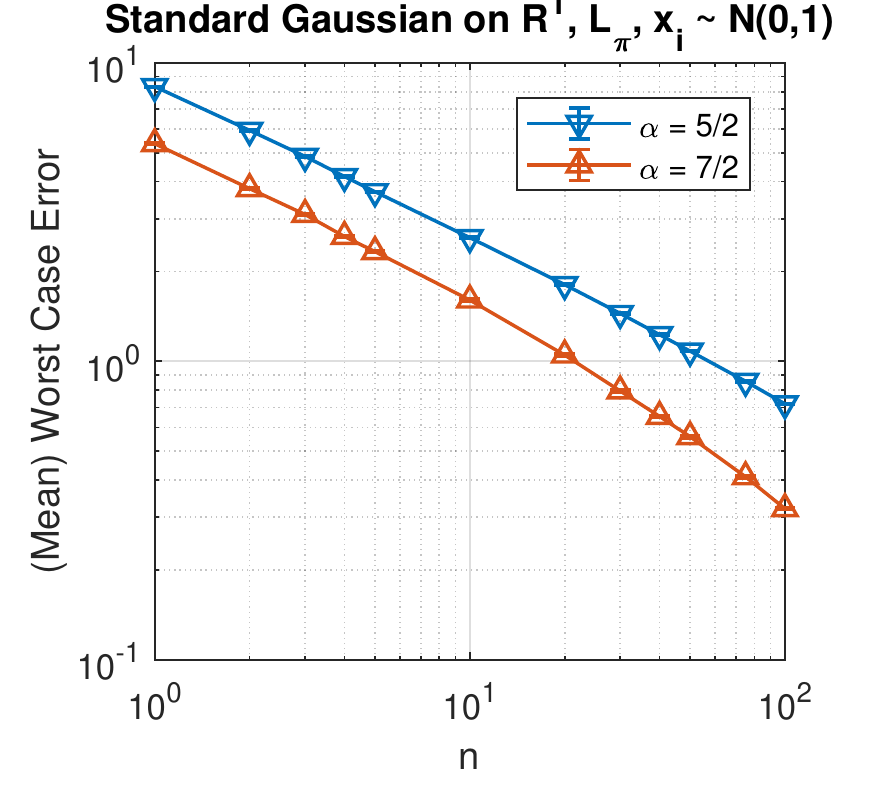}
\includegraphics[width = 0.33\textwidth,clip,trim = 6cm 10.5cm 6cm 11cm]{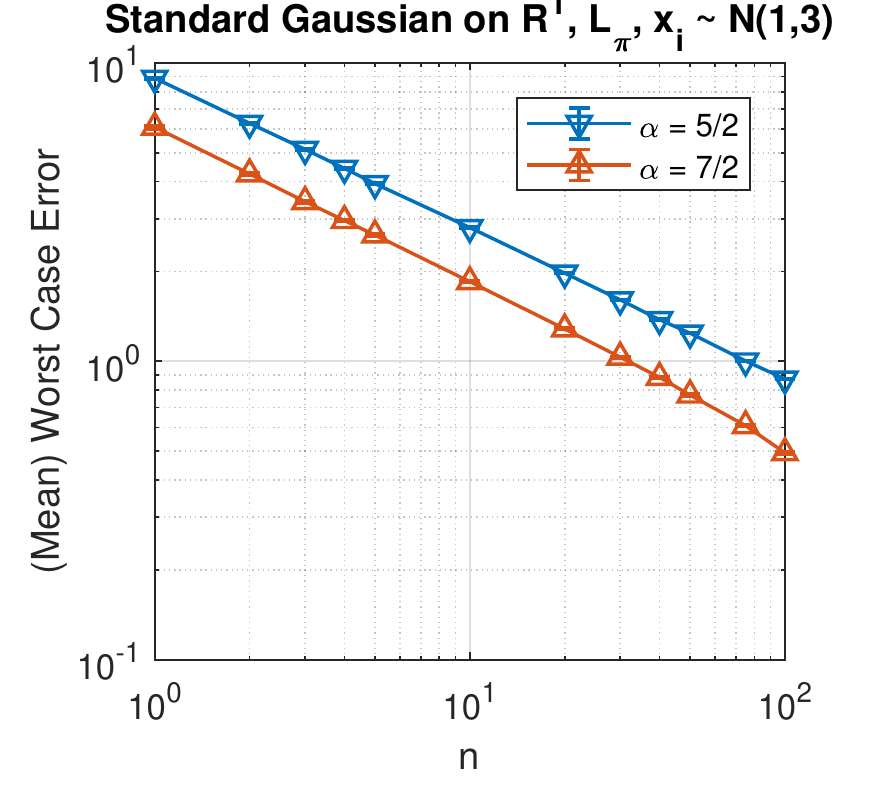}

\caption{Results for the standard Gaussian on the Euclidean manifold in dimension $d = 2$.
The worst case error of the proposed method (Eqn. \ref{eq: wce}) was plotted for various $\alpha$, controlling the smoothness of the kernel, and various $n$, the number of evaluations of the integrand.
[Left: The points $\bm{x}_i \sim \mathcal{N}(0,1)$ were drawn from the target.
Right: The points $\bm{x}_i \sim \mathcal{N}(1,3)$ were drawn from an incorrect distribution.
In each case the differential operator $L_\pi$ was used.]
}
\label{fig: Euclidean results 2}
\end{figure}

To assess the robustness of the proposed method when the integrand $f$ need not lie close to the set $H_{\pi,\sigma}$, we considered integration of explicit test functions $f(\bm{x})$.
In Fig. \ref{fig: Euclidean results 3} we fixed $d = 1$, $a = \frac{7}{2}$ and generated $n$ independent samples from $\mathcal{Q} = \mathcal{N}(1,3)$.
The proposed method based on $L_\pi$ was employed to integrate $f(x) = x^j$ for $j \in \{1,2\}$ and the integral estimates so-obtained were compared to those provided by the importance sampling Monte Carlo estimator:
$$
\frac{1}{n} \sum_{i=1}^n f(\bm{x}_i) \frac{\pi_{\mathcal{P}}(\bm{x}_i)}{\pi_{\mathcal{Q}}(\bm{x}_i)}
$$
Of course, the true integrals in this case are, respectively, $0$ and $1$.
It is seen that the proposed method provided lower variance estimation than the Monte Carlo method.

\begin{figure}[h!]
\centering
\includegraphics[width = 0.55\textwidth,clip,trim = 3cm 10cm 3cm 9cm]{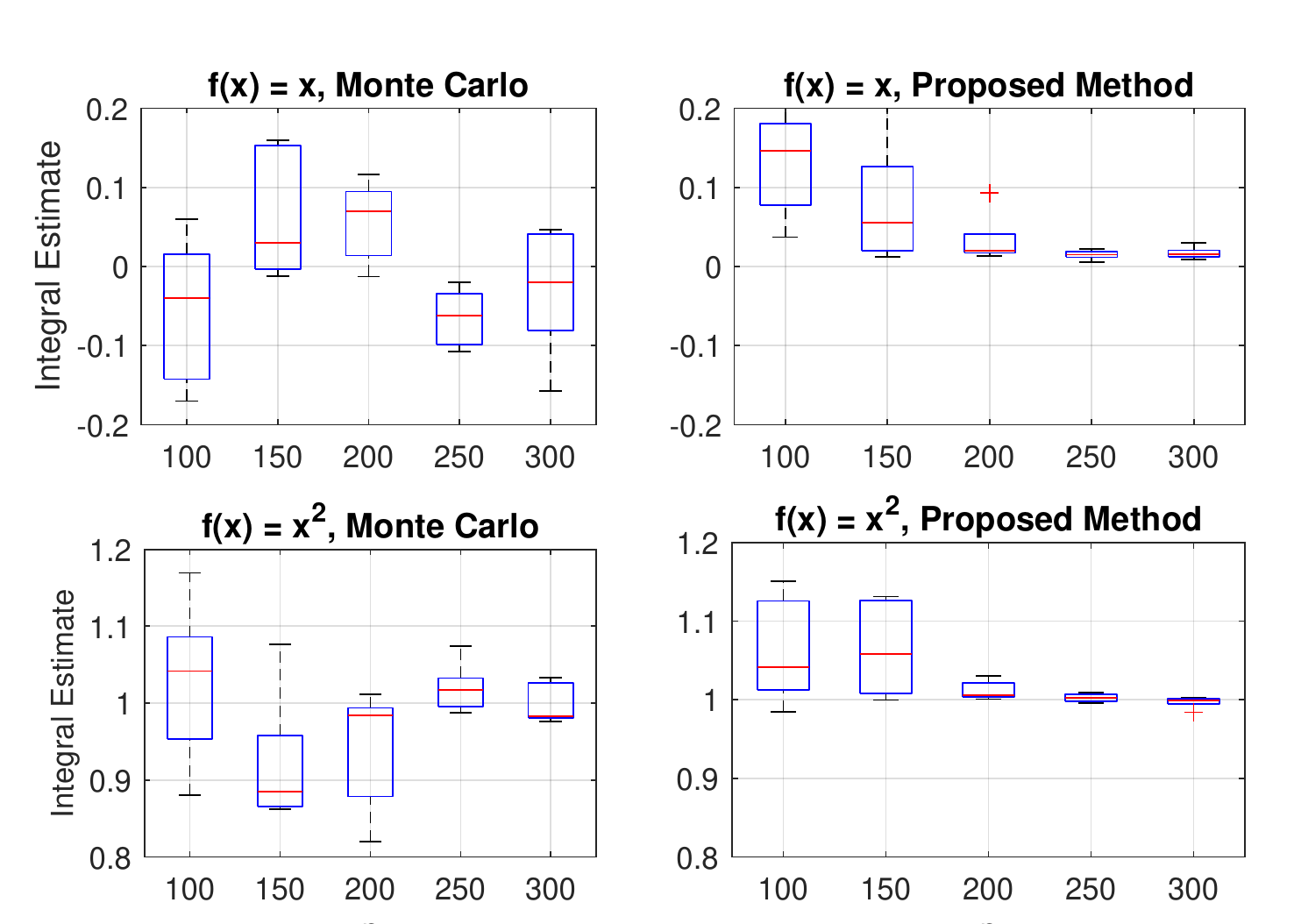}

\caption{Additional results for $M = \mathbb{R}$.
Integral estimates were obtained and plotted for various $n$, the number of evaluations of the integrand.
The true integral for $f(x) = x$ is 0 and for $f(x) = x^2$ is 1.
Several realisations of the point set were obtained and the associated estimates were aggregated into a boxplot.
[Black: The importance sampling Monte Carlo method.
Blue: The proposed method based on the same point set.]
}
\label{fig: Euclidean results 3}
\end{figure}

\subsection{Chain Rule for Partial Differentiation} \label{ap: partial diff}

The following identities are useful, converting differential operators in local coordinates $\bm{q}$ into differential operators in global coordinates $\bm{x}$:
\begin{eqnarray}
\frac{\partial}{\partial q_i} & = & \sum_{j=1}^m \frac{\partial x_j}{\partial q_i} \frac{\partial}{\partial x_j} \label{eq: chain rule 1} \\
\frac{\partial^2}{\partial q_i^2} & = & \sum_{j,k=1}^m \frac{\partial x_j}{\partial q_i} \frac{\partial x_k}{\partial q_i} \frac{\partial^2}{\partial x_j \partial x_k} + \sum_{j=1}^3 \frac{\partial^2 x_j}{\partial q_i^2} \frac{\partial}{\partial x_j}  \label{eq: chain rule 2}
\end{eqnarray}
For example, an application of these identities to Eqn. \ref{eq: Lpi for S2} leads to the differential operator used in the code snippet in Fig. \ref{fig: matlab 2}.

\subsection{Choice of Kernel on $\mathbb{S}^2$} \label{ap: C1C2}

The constant terms in the kernel in Eqn. \ref{eq: S2 kernel} are as follows:
\begin{eqnarray*}
C^{(1)} & = & \frac{2^{2\alpha - 2}}{2\alpha - 2} \frac{(\frac{m}{2})_{2\alpha - 2}}{(m)_{2\alpha - 2}} \\
C^{(2)} & = & (-1)^{\alpha - \frac{1}{2}} 2^{1 - 2\alpha} \frac{\Gamma(\frac{m+1}{2}) \Gamma(\alpha - \frac{1}{2}) \Gamma(\alpha - \frac{1}{2})}{\sqrt{\pi} \Gamma(\frac{m}{2}) (\frac{1}{2})_{\alpha - \frac{1}{2}} (\frac{m}{2})_{\alpha - \frac{1}{2}} } .
\end{eqnarray*}
Here $(z)_n := \Gamma(z+n) / \Gamma(z)$ is the Pochhammer symbol.
See \cite{Brauchart2013} for full detail.

\subsection{Additional Results for $\mathbb{S}^2$} \label{ap: S2 numerics extra}

To assess the robustness of the proposed method when the integrand $f$ need not lie close to the set $H_{\pi,\sigma}$, we considered integration of explicit test functions $f(\bm{x})$.
In Fig. \ref{fig: sphere results 2} we fixed $\alpha = \frac{7}{2}$ and generated a quasi-uniform point set of size $n$, as described in the main text.
The proposed method based on $L_\pi$ was employed to integrate $f(x) = x^j$ for $j \in \{1,2\}$ and the integral estimates so-obtained were compared to those provided by the importance sampling Monte Carlo estimator:
$$
\frac{\sum_{i=1}^n f(\bm{x}_i) \pi(\bm{x}_i)}{\sum_{i=1}^n \pi(\bm{x}_i)}
$$
From symmetry, the true integrals of $x_2$ and $x_3$ are, respectively, $0$ and $1$.

Results are presented in Fig. \ref{fig: sphere results 2}.
It is seen that the proposed method provided lower variance estimation than the Monte Carlo method.

\begin{figure}[h!]
\centering
\includegraphics[width = 0.55\textwidth,clip,trim = 3cm 7.5cm 3cm 6.5cm]{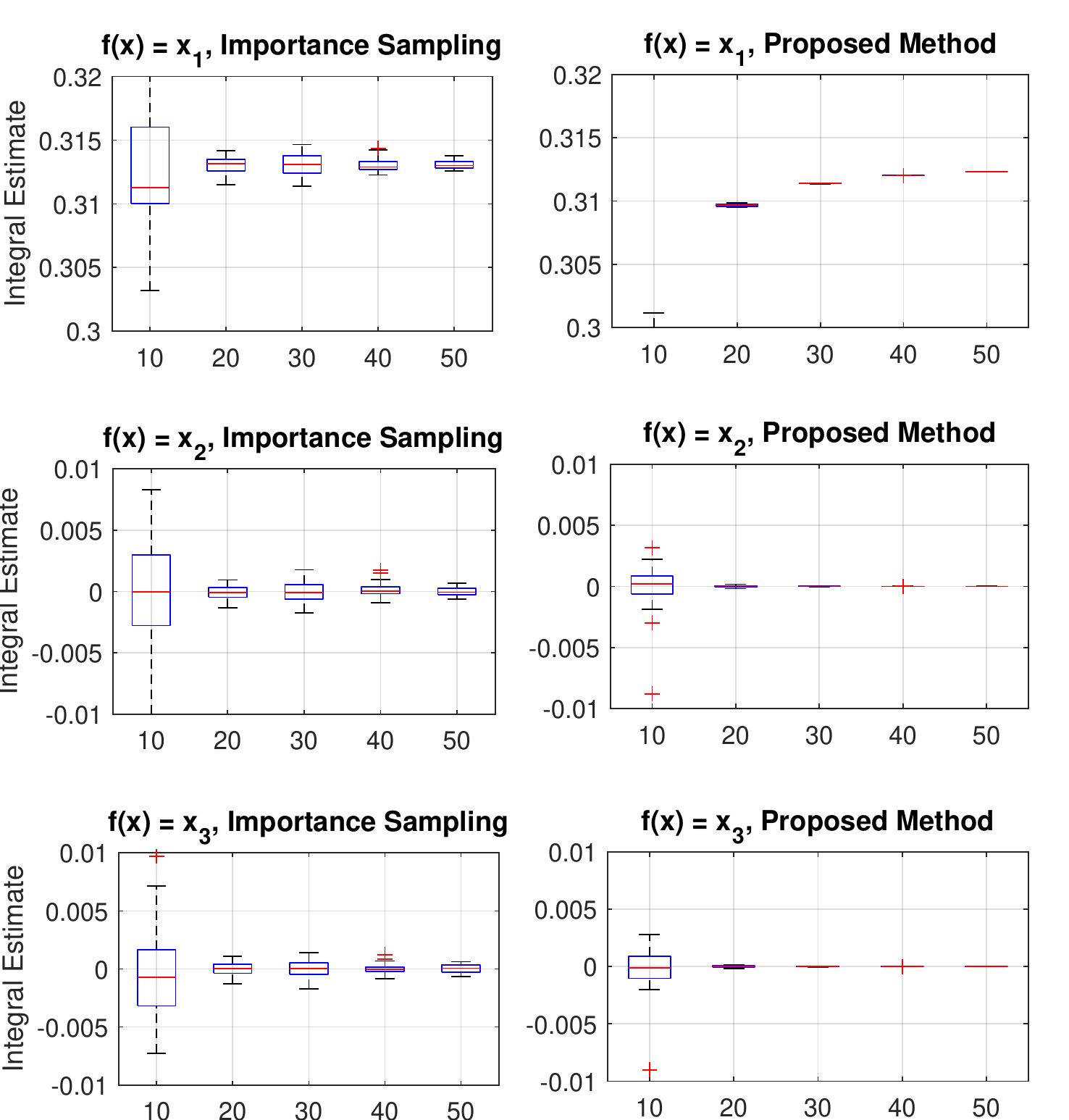}

\includegraphics[width = 0.55\textwidth,clip,trim = 2.8cm 7cm 3.2cm 6.4cm]{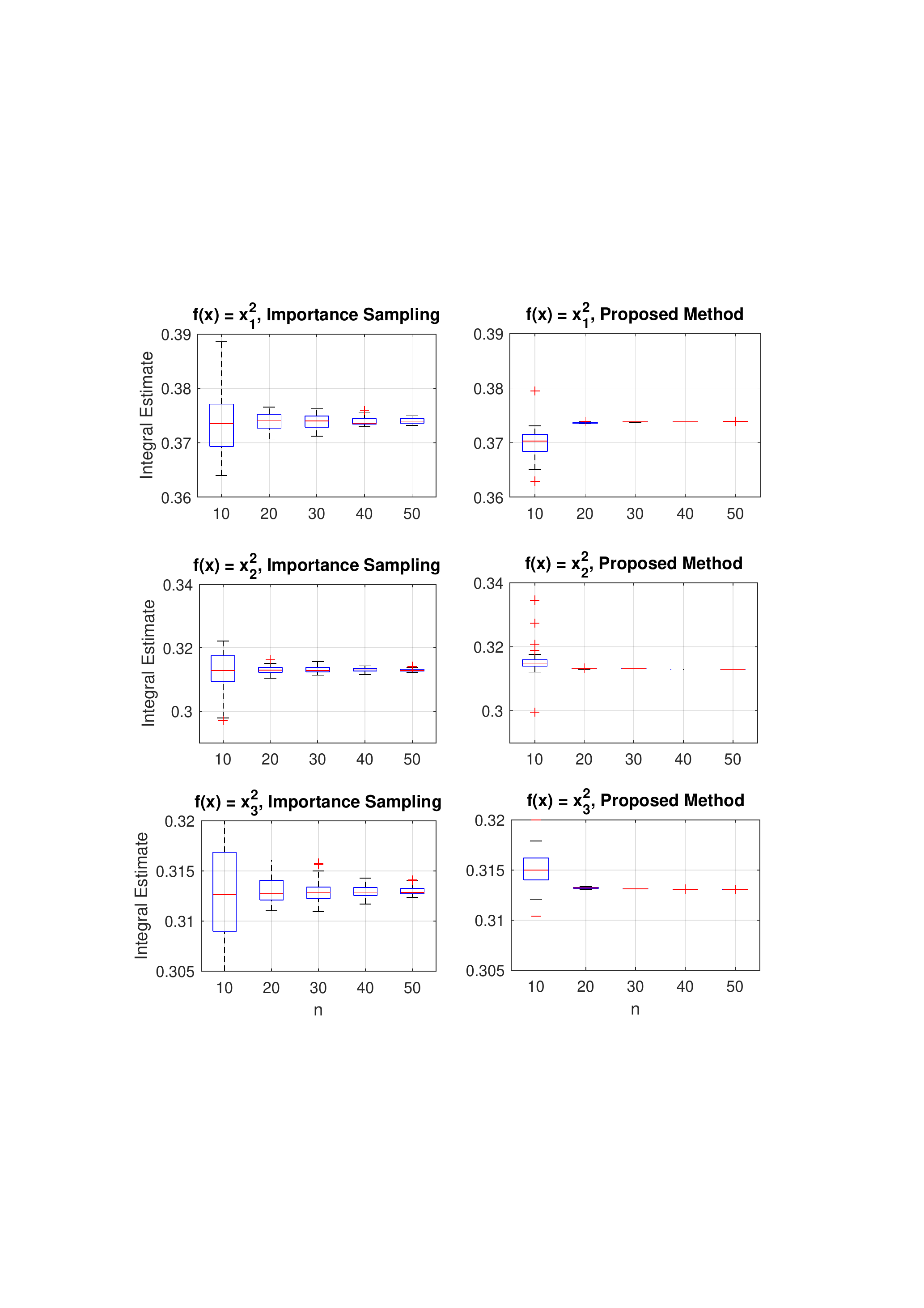}

\caption{Additional results for $M = \mathbb{S}^2$.
Integral estimates were obtained and plotted for various $n$, the number of evaluations of the integrand.
Several realisations of the point set were obtained and the associated estimates were aggregated into a boxplot.
[Black: The importance sampling Monte Carlo method.
Blue: The proposed method based on the same point set.]
}
\label{fig: sphere results 2}
\end{figure}

\subsection{Derivations for Paleomagnetic Model} \label{ap: paleo supp sec}

Consider the local coordinate system $\bm{q} = (q_1,q_2,q_3)$ such that
$$
\bm{x} = \nu(\bm{q}) = (\cos q_1 \sin q_2 , \sin q_1 \sin q_2 , \cos q_2 , q_3)
$$
where $q_1 \in [0,2\pi)$, $q_2 \in [0,\pi]$ and $q_3 \in \mathbb{R}_+$.
Thus $M$ can be considered a $m=3$ dimensional manifold embedded in $\mathbb{R}^4$.

\paragraph{Differential Operator}

From the coordinate map $\nu$ we compute the metric tensor
$$
\mathrm{G} = \left( \begin{array}{ccc} \sin^2 q_2 & 0 & 0 \\ 0 & 1 & 0 \\ 0 & 0 & 1 \end{array} \right) .
$$
This leads to a volume element $\sin q_2 \; \mathrm{d}q_1 \mathrm{d} q_2 \mathrm{d}q_3$.
It follows that, for a function $\phi : M \rightarrow \mathbb{R}$, we have the gradient differential operator
\begin{eqnarray*}
\nabla \phi & = & \frac{1}{\sin^2 q_2} \frac{\partial \phi}{\partial q_1} \partial_{q_1} + \frac{\partial \phi}{\partial q_2} \partial_{q_2} + \frac{\partial \phi}{\partial q_3} \partial_{q_3} .
\end{eqnarray*}
Similarly, for a vector field $\underline{\bm{s}} = s_1 \partial_{q_1} + s_2 \partial_{q_2} + s_3 \partial_{q_3}$, we have the divergence operator
\begin{eqnarray*}
\nabla \cdot \underline{\bm{s}} & = & \frac{\partial s_1}{\partial q_1} + \frac{\partial s_2}{\partial q_2} + \frac{\partial s_3}{\partial q_3} + \frac{\cos q_2}{\sin q_2} s_2 .
\end{eqnarray*}
Thus the linear operator $L_\pi$ that we consider is:
\begin{eqnarray*}
L_{\pi} (\phi) & = & \frac{\cos q_2}{\sin q_2} \frac{\partial \phi}{\partial q_2} + \frac{1}{\sin^2 q_2} \left\{ \frac{1}{\pi} \frac{\partial \pi}{\partial q_1} \frac{\partial \phi}{\partial q_1} + \frac{\partial^2 \phi}{\partial q_1^2} \right\} \\
& & + \left\{ \frac{1}{\pi} \frac{\partial \pi}{\partial q_2} \frac{\partial \phi}{\partial q_2} + \frac{\partial^2 \phi}{\partial q_2^2} \right\} + \left\{ \frac{1}{\pi} \frac{\partial \pi}{\partial q_3} \frac{\partial \phi}{\partial q_3} + \frac{\partial^2 \phi}{\partial q_3^2} \right\} .
\end{eqnarray*}
Turning this into expressions in terms of $\bm{x}$ requires that we notice
$$
\frac{\cos q_2}{\sin q_2} = \frac{x_3}{\sqrt{1 - x_3^2}}, \qquad \frac{1}{\sin^2 q_2} = \frac{1}{1 - x_3^2}
$$
and use chain rule for partial differentiation as in Eqns. \ref{eq: chain rule 1} and \ref{eq: chain rule 2}.
This manifold-specific portion of MATLAB code is presented in Fig. \ref{fig: matlab 3}.

\paragraph{Choice of Kernel}

To proceed, we require a reproducing kernel $k$ defined on $M = \mathbb{S}^2 \times \mathbb{R}_+$.
However, $M$ has been embedded in the ambient space $\mathbb{R}^4$ and we can induce a kernel on $M$ as the restriction of a kernel on $\mathbb{R}^4$.
Since we know that our test functions $f$ are each infinitely differentiable, we elected to use a smooth (exponentiated quadratic) kernel:
\begin{eqnarray*}
k\left( \left[ \begin{array}{c} \bm{\mu} \\ \kappa \end{array} \right] , \left[ \begin{array}{c} \bm{\mu}' \\ \kappa' \end{array} \right] \right) & = & \tan^{-2}(\kappa) \tan^{-2}(\kappa') \exp\left(- \frac{1}{2} \left\| \left[ \begin{array}{c}\bm{\mu} - \bm{\mu}' \\ \kappa - \kappa' \end{array} \right] \right\|^2 \right)
\end{eqnarray*}
The first terms are needed to ensure vanishing of the integral over $\partial M$ in the divergence theorem.
In this case, the condition is that $\partial \phi / \partial \kappa$ vanishes at $\kappa = 0$.
Note that the choice of arctan is not unique, and any smooth function $t(\kappa)$ could be used provided that $t'(\kappa) = 0$ whenever $\kappa = 0$. 

\paragraph{Experimental Results}

In addition to points $\bm{x}_i = (\bm{\mu}_i,\kappa_i)$ obtained via MCMC, we also considered a stratified point set.
To be specific, we considered a tensor-product design where $O(n^{2/3})$ basis points were quasi-uniformly distributed over $\mathbb{S}^2$ as in Sec \ref{subsec: sphere} and $O(n^{1/3})$ basis points were stratified according to the quantiles of the density 
\begin{eqnarray}
\tilde{\pi}(\kappa) \propto \left(\frac{\kappa^{1/2}}{I_{1/2}(\kappa)} \right)^{c+n} \exp( R_n \kappa) . \label{eq: ptilde}
\end{eqnarray}
The tensor product of these two bases provided a point set of size $O(n)$.
The density in Eqn. \ref{eq: ptilde} is a heuristic; it is not exactly related to the posterior $\kappa$ marginal, since it holds $\bm{\mu} = \bm{\mu}_n$ fixed, but is perhaps expected to be a reasonable approximation to this marginal.

\newpage
\onecolumn
\subsection{Symbolic Differentiation in MATLAB} \label{ap: sym diff}

In this section we provide code snippets that demonstrate the use of symbolic differentiation in computation of the kernel $k_\pi$.
Fig. \ref{fig: matlab} is from the Euclidean case $M = \mathbb{R}^d$ with $d = 2$, whilst Fig. \ref{fig: matlab 2} is for the sphere $M = \mathbb{S}^2$ and Fig. \ref{fig: matlab 3} is for the paleomagnetic model with $M = \mathbb{S}^2 \times \mathbb{R}_+$.

For Fig. \ref{fig: matlab}, it should be noted that only lines 5-12 depend on the geometry of the manifold, and these are independent of both $\pi$ and $k$.
Thus, generic code for (e.g.) the sphere $\mathbb{S}^2$ and other manifolds can be provided.
Figs. \ref{fig: matlab 2} and \ref{fig: matlab 3} therefore include just the code that is specific to differentiation on their manifold.

\begin{figure*}[h!]
\begin{lstlisting}
% logarithm of the un-normalised measure pi
log_pi = @(x1,x2) - x1^2 - x2^2;

% Differential operator L_pi on the Euclidean manifold R^2
dq1 = @(f,x1,x2) diff(f,x1); % d/dq_1
dq2 = @(f,x1,x2) diff(f,x2); % d/dq_2
d2q1 = @(f,x1,x2) diff(f,x1,2); % d^2/dq_1^1
d2q2 = @(f,x1,x2) diff(f,x2,2); % d^2/dq_2^2
L = @(f,x1,x2) dq1(log_pi(x1,x2),x1,x2)*dq1(f,x1,x2) ...
               + dq2(log_pi(x1,x2),x1,x2)*dq2(f,x1,x2) ...
               + d2q1(f,x1,x2) ...
               + d2q2(f,x1,x2);

% Radial basis function (alpha = 5/2)
matern = @(r) (1 + sqrt(5)*r + 5*r^2/3) * exp(-sqrt(5)*r);

% Reproducing kernel
syms x1 x2 y1 y2
k = matern(sqrt((x1-y1)^2 + (x2-y2)^2));

% Symbolic differentiation
L_k = L(k,x1,x2); % differentiate wrt [x1,x2]
L_Lbar_k = L(L_k,y1,y2); % differentiate wrt [y1,y2]
\end{lstlisting}
\caption{Symbolic differentiation was used to automate computation of the kernel $k_\pi$. [This MATLAB R2017b code snippet is for the problem considered in Section \ref{subsec: Euclidean} for $d=2$ dimensions. The differential operator was $L_\pi$.]}
\label{fig: matlab}
\end{figure*}

\begin{figure*}[h!]
\begin{lstlisting}
% Differential operator L_pi on the sphere \mathbb{S}^2
dq1 = @(f,x1,x2,x3) -x2*diff(f,x1) + x1*diff(f,x2); % d/dq_1
dq2 = @(f,x1,x2,x3) x1*x3*(1-x3^2)^(-1/2)*diff(f,x1) ...
                    + x2*x3*(1-x3^2)^(-1/2)*diff(f,x2) ...
                    - (1-x3^2)^(1/2)*diff(f,x3); % d/dq_2
d2q1 = @(f,x1,x2,x3) x2^2*diff(f,x1,2) ...
                     - 2*x1*x2*diff(f,x1,x2) ...
                     + x1^2*diff(f,x2,2) ...
                     - x1*diff(f,x1) ...
                     - x2*diff(f,x2); % d^2/dq_1^2
d2q2 = @(f,x1,x2,x3) x1^2*x3^2*(1-x3^2)^(-1)*diff(f,x1,2) ...
                     + 2*x1*x2*x3^2*(1-x3^2)^(-1)*diff(f,x1,x2) ...
                     - 2*x1*x3*diff(f,x1,x3) ...
                     + x2^2*x3^2*(1-x3^2)^(-1)*diff(f,x2,2) ...
                     - 2*x2*x3*diff(f,x2,x3) ...
                     + (1-x3^2)*diff(f,x3,2) ...
                     - x1*diff(f,x1) ...
                     - x2*diff(f,x2) ...
                     - x3*diff(f,x3); % d^2/dq_2^2
L = @(f,x1,x2,x3) x3*(1-x3^2)^(-1/2)*dq2(f,x1,x2,x3) ...
        + (1-x3^2)^(-1)*dq1(log_pi(x1,x2,x3),x1,x2,x3)*dq1(f,x1,x2,x3) ...
                  + (1-x3^2)^(-1)*d2q1(f,x1,x2,x3) ...
                  + dq2(log_pi(x1,x2,x3),x1,x2,x3)*dq2(f,x1,x2,x3) ...
                  + d2q2(f,x1,x2,x3); 
\end{lstlisting}
\caption{Symbolic differentiation was used to automate computation of the kernel $k_\pi$. [This MATLAB R2017b code snippet is for the problem considered in Section \ref{subsec: sphere}, the sphere $\mathbb{S}^2$. The differential operator was $L_\pi$.]}
\label{fig: matlab 2}
\end{figure*}

\begin{figure*}[h!]
\begin{lstlisting}
% Differential operator L_pi on the manifold \mathbb{S}^2 x R_+
dq1 = @(f,x1,x2,x3,x4) -x2*diff(f,x1) + x1*diff(f,x2); % d/dq_1
dq2 = @(f,x1,x2,x3,x4) x1*x3*(1-x3^2)^(-1/2)*diff(f,x1) ...
                       + x2*x3*(1-x3^2)^(-1/2)*diff(f,x2) ...
                       - (1-x3^2)^(1/2)*diff(f,x3); % d/dq_2
dq3 = @(f,x1,x2,x3,x4) diff(f,x4); % d/dq_3
d2q1 = @(f,x1,x2,x3,x4) x2^2*diff(f,x1,2) ...
                        - 2*x1*x2*diff(f,x1,x2) ...
                        + x1^2*diff(f,x2,2) ...
                        - x1*diff(f,x1) ...
                        - x2*diff(f,x2); % d^2/dq_1^2
d2q2 = @(f,x1,x2,x3,x4) x1^2*x3^2*(1-x3^2)^(-1)*diff(f,x1,2) ...
                        + 2*x1*x2*x3^2*(1-x3^2)^(-1)*diff(f,x1,x2) ...
                        - 2*x1*x3*diff(f,x1,x3) ...
                        + x2^2*x3^2*(1-x3^2)^(-1)*diff(f,x2,2) ...
                        - 2*x2*x3*diff(f,x2,x3) ...
                        + (1-x3^2)*diff(f,x3,2) ...
                        - x1*diff(f,x1) ...
                        - x2*diff(f,x2) ...
                        - x3*diff(f,x3); % d^2/dq_2^2
d2q3 = @(f,x1,x2,x3,x4) diff(f,x4,2); % d^2/dq_3^2
L = @(f,x1,x2,x3,x4) x3*(1-x3^2)^(-1/2)*dq2(f,x1,x2,x3,x4) ...
  + (1-x3^2)^(-1)*dq1(logp(x1,x2,x3,x4),x1,x2,x3,x4)*dq1(f,x1,x2,x3,x4) ...
               + (1-x3^2)^(-1)*d2q1(f,x1,x2,x3,x4) ...
               + dq2(logp(x1,x2,x3,x4),x1,x2,x3,x4)*dq2(f,x1,x2,x3,x4) ...
               + d2q2(f,x1,x2,x3,x4) ...
               + dq3(logp(x1,x2,x3,x4),x1,x2,x3,x4)*dq3(f,x1,x2,x3,x4) ...
               + d2q3(f,x1,x2,x3,x4); 

\end{lstlisting}
\caption{Symbolic differentiation was used to automate computation of the kernel $k_\pi$. [This MATLAB R2017b code snippet is for the problem considered in Section \ref{subsec: Application}, the manifold $\mathbb{S}^2 \times \mathbb{R}_+$. The differential operator was $L_\pi$.]}
\label{fig: matlab 3}
\end{figure*}

\end{document}